\newcommand\copyrighttext{%
  \footnotesize 
  \textcopyright 2023 IEEE. Personal use of this material is permitted.
  Permission from IEEE must be obtained for all other uses, in any current or future
  media, including reprinting/republishing this material for advertising or promotional
  purposes, creating new collective works, for resale or redistribution to servers or
  lists, or reuse of any copyrighted component of this work in other works.
%   DOI: \href{<http://tex.stackexchange.com>}{<DOI No.>}
}
\newcommand\copyrightnotice{%
    \begin{tikzpicture}[remember picture,overlay]
        \node[anchor=south,yshift=10pt] at (current page.south) {\fbox{\parbox{\dimexpr\textwidth-\fboxsep-\fboxrule\relax}{\copyrighttext}}};
    \end{tikzpicture}%
}
\newtheorem{theorem}{Theorem}
\newtheorem{proposition}[theorem]{Proposition}
\newtheorem{definition}[theorem]{Definition}
\newtheorem{remark}[theorem]{Remark}
\newtheorem{assumption}[theorem]{Assumption}
\newtheorem{example}[theorem]{Example}
\title{\LARGE \bf
    Control of bilinear systems using gain-scheduling% 
    : Stability and performance guarantees
}
\author{Robin Str\"asser, Julian Berberich, Frank Allg\"ower% 
\thanks{F.\ Allgöwer is thankful that this work was funded by the Deutsche Forschungsgemeinschaft (DFG, German Research Foundation) under Germany's Excellence Strategy -- EXC 2075 -- 390740016 and within grant AL 316/15-1 -- 468094890. 
R.\ Strässer thanks the Graduate Academy of the SC SimTech for its support.}% 
\thanks{R.\ Str\"asser, J.\ Berberich, and F.\ Allg\"ower are with the Institute for Systems Theory and Automatic Control, University of Stuttgart, 70550 Stuttgart, Germany
		(email:\{straesser,berberich,allgower\}@ist.uni-stuttgart.de)}%
}
\begin{document}
\maketitle
\copyrightnotice
\thispagestyle{empty}
\pagestyle{empty}
 
%%%%%%%%%%%%%%%%%%%%%%%%%%%%%%%%%%%%%%%%%%%%%%%%%%%%%%%%%%%%%%%%%%%%%%%%%%%%%%%%

\begin{abstract}%
    In this paper, we present a state-feedback controller design method for bilinear systems.
    To this end, we write the bilinear system as a linear fractional representation by interpreting the state in the bilinearity as a structured uncertainty. 
    Based on that, we derive convex conditions in terms of linear matrix inequalities for the controller design, which are efficiently solvable by semidefinite programming. 
    Further, we prove asymptotic stability and quadratic performance of the resulting closed-loop system locally in a predefined region.
    The proposed design uses gain-scheduling techniques and results in a state feedback with rational dependence on the state, which can substantially reduce conservatism and improve performance in comparison to a simpler, linear state feedback.
    Moreover, the design method is easily adaptable to various scenarios due to its modular formulation in the robust control framework.
    Finally, we apply the developed approaches to numerical examples and illustrate the benefits of the approach. 
\end{abstract}

\section{INTRODUCTION}\label{sec:introduction}
Nonlinear control theory is a branch of control engineering that deals with the analysis and design of control systems that exhibit nonlinear behavior. In contrast to linear control systems, nonlinear control systems have complex and sometimes unpredictable dynamics that make their analysis and control challenging~\cite{khalil:2002}. An important class of nonlinear systems is the class of bilinear systems, which finds many practical applications in, e.g., engineering, power systems, biology, economics, and ecology (cf.~\cite{mohler:1973,bruni:dipillo:koch:1974,spinu:athanasopulos:lazar:bitsoris:2012,espana:landau:1978}). 
Moreover, bilinear models have recently gained an increasing interest since linearization techniques based on a higher-dimensional lifting, such as, e.g., Carleman linearization~\cite{carleman:1932} or Koopman operator theory~\cite{koopman:1931}, transform general nonlinear systems into possibly infinite-dimensional bilinear control systems. 

As a brief overview, we outline some methods available in the literature for controller design for bilinear systems. 
Existing approaches use, e.g., Lyapunov's second method~\cite{pedrycz:1980}, bang-bang control with linear switching policy~\cite{longchamp:1980}, or quadratic state feedback~\cite{gutman:1981,gutman:1980}. While these rely on checking multivariate polynomial equations, a more practical Lyapunov-based state-feedback controller was proposed in~\cite{derese:noldus:1980}. 
Further approaches include constant feedback~\cite{luesink:nijmeijer:1989} and optimization-based nonlinear state feedback~\cite{benallou:mellichamp:seborg:1988}, or rely on a version of the Kalman-Yakubovich-Popov lemma for passive bilinear systems~\cite{lin:byrnes:1994}. 
Linear matrix inequality (LMI)-based local stabilization techniques are presented in~\cite{amato:cosentino:fiorillo:merola:2009} for a polytopic region and in~\cite{khlebnikov:2018} for an ellipsoidal region, where the latter uses Petersen's lemma~\cite{petersen:1987,khlebnikov:shcherbakov:2008} to interpret parts of the bilinearity as an uncertainty. 
A related approach is taken in~\cite{strasser:berberich:allgower:2023}, which derives regional closed-loop stability guarantees for bilinear system under linear state feedback by reformulating the system as a linear fractional representation (LFR).
Alternatively, bilinear systems can also be viewed as (quasi-)linear parameter-varying systems when considering the state as scheduling variable, which allows to derive convex controller design conditions (see, e.g.,~\cite{huang:jadbabaie:1999}).
A different approach for regional stabilization of nonlinear systems is, e.g., presented in~\cite{coutinho:desouza:dasilva:caldeira:prieur:2019} for input-delayed uncertain polynomial systems.

The contribution of the present paper is to derive a novel controller design method for bilinear systems using gain-scheduling techniques. 
Our approach leads to an LMI feasibility problem which results in a state-feedback controller with rational dependence on the state.
Then, we prove local stability of the bilinear system in closed loop. 
The design and its theoretical analysis are based on an LFR, where, inspired by~\cite{khlebnikov:2018}, the state in the bilinearity is interpreted as uncertainty.
This idea is similar to~\cite{strasser:berberich:allgower:2023} with the main difference that the proposed approach allows for a significantly more flexible controller parametrization and for considering multi-dimensional inputs.
More precisely, we use gain-scheduling techniques (compare~\cite{scherer:2001,veenman:scherer:2014,lawrence:rugh:1993,leith:leithead:2000}), which allows to design controllers with rational dependence on the state.
Additionally, we derive more general multipliers to describe the bilinear terms that are beneficial for multiple inputs and reduce conservatism of the controller design. 
Further, we extend these results to design controllers with closed-loop performance guarantees, e.g., on the $\cL_2$-gain, and demonstrate the improvements of the proposed design approaches.

The paper is organized as follows. 
In Section~\ref{sec:linear-representation-bilinear}, we first represent bilinear systems locally as an LFR by interpreting the state in the bilinearity as a structured uncertainty. 
Then, Section~\ref{sec:controller-design-bilinear} contains a state-feedback controller design procedure guaranteeing closed-loop stability for bilinear systems based on gain-scheduling.
In Section~\ref{sec:controller-performance}, we extend the design procedure and derive guarantees for asymptotic stability and quadratic performance of the closed loop. 
The developed controller is applied to three numerical examples in Section~\ref{sec:simulation-examples}. 
Finally, we conclude the paper in Section~\ref{sec:conclusion}.

\paragraph*{Notation}
We write $I_p$ for the $p\times p$ identity matrix and $0_{p\times q}$ for the $p\times q$ zero matrix, where we omit the index if the dimension is clear from the context. 
If $A$ is symmetric, then we write $A\succ 0$ or $A\succeq 0$ if $A$ is positive definite or positive semidefinite, respectively. 
Negative (semi)definiteness is defined analogously. 
Matrix blocks which can be inferred from symmetry are denoted by $\star$ and we abbreviate $B^\top A B$ by writing $[\star]^\top AB$. 
Finally, $\kron$ denotes the Kronecker product.

\section{LINEAR FRACTIONAL REPRESENTATION OF BILINEAR SYSTEMS}\label{sec:linear-representation-bilinear}
In this section, we introduce the problem setting of this paper (Section~\ref{sec:problem-setting}) and derive an LFR which represents the bilinear system (Section~\ref{sec:LFR-nominal}).

\subsection{Problem setting}\label{sec:problem-setting}
This paper considers discrete-time bilinear systems of the form
\begin{equation}\label{eq:dynamics-bilinear}
    z_+ 
    = A z + B_0 u + \sum_{j=1}^m u_j B_{j} z
    = Az + B_0 u + \tB (u\kron z),
\end{equation}
where $z,z_+\in\bbR^N$, $u\in\bbR^{m}$, $A\in\bbR^{N\times N}$, $B_0\in\bbR^{N\times m}$, $B_j\in\bbR^{N\times N}$, $j=1,...,m$, and 
$
    \tB = \begin{bmatrix}
        B_1 & \cdots & B_m
    \end{bmatrix}
$. 
Our goal is to design a state-feedback controller $u=k(z)$ such that the closed-loop system $z_+=Az + B_0k(z) + \tB(k(z)\kron z)$ satisfies local stability (Section~\ref{sec:controller-design-bilinear}) and performance (Section~\ref{sec:controller-performance}) requirements. More specifically, we want to achieve local closed-loop guarantees for all initial conditions $z_0$ in an ellipsoidal region $\cZ_\mathrm{RoA}$ defined later.
To this end, we define the set $\cZ = \{z\in\bbR^N \mid \eqref{eq:condition-z-in-Z}~\text{holds}\}$ based on the quadratic inequality 
\begin{equation}\label{eq:condition-z-in-Z}
    \begin{bmatrix}
        z \\ 1
    \end{bmatrix}^\top 
    \begin{bmatrix}
        Q_z & S_z \\ S_z^\top & R_z
    \end{bmatrix}
    \begin{bmatrix}
        z \\ 1
    \end{bmatrix}
    \geq 0,
\end{equation}
where $Q_z \prec 0$, and $R_z \succ 0$. 
The description of $\cZ$ includes, e.g., a region described by $z^\top z\leq c$ with $c>0$ when choosing $Q_z=-I$, $S_z=0$, and $R_z=c$.
Further, we assume that the inverse 
\begin{equation*}
    \begin{bmatrix}
        \tQ_z & \tS_z \\
        \tS_z^\top & \tR_z
    \end{bmatrix}
    \coloneqq 
    \begin{bmatrix}
        Q_z & S_z \\
        S_z^\top & R_z
    \end{bmatrix}^{-1}    
\end{equation*}
exists.
The later derived theoretical analysis will rely on the state $z$ being within $\cZ$ for all times. 
This will be ensured via a suitable Lyapunov function sublevel set $\cZ_\mathrm{RoA}\subseteq \cZ$.

\begin{remark}
    Although we consider only discrete-time bilinear systems in this paper, we conjecture that the proposed controller design and its theoretical guarantees can be translated to continuous-time bilinear systems.
\end{remark}

\subsection{Linear fractional representation of bilinear systems}\label{sec:LFR-nominal}
In the following, we reformulate the bilinear system~\eqref{eq:dynamics-bilinear} as an LFR.
In particular,~\eqref{eq:dynamics-bilinear} is equivalent to
\begin{subequations}\label{eq:LFR-bilinear-open-loop}
    \begin{align}
        \begin{bmatrix}
            z_+ \\ u
        \end{bmatrix}
        &= \begin{bmatrix}
            A & B_0 & \tB \\
            0 & I & 0
        \end{bmatrix}
        \begin{bmatrix}
            z \\ u \\ w
        \end{bmatrix},\\
        w &= (I_m\kron z) u \label{eq:LFR-bilinear-nominal-general-uncertainty}
    \end{align}
\end{subequations}
with $z\in\cZ$. 
An LFR as in~\eqref{eq:LFR-bilinear-open-loop} is a common representation of uncertain systems~\cite{zhou:doyle:glover:1996}. Here, the state $z$ is interpreted as an uncertainty and, thus, we reduce the bilinear control problem to a linear control problem with nonlinear state-dependent uncertainty, i.e., the state $z$.
More precisely, the LFR is exposed to the uncertainty $(I_m\kron z)$ for which we need a suitable uncertainty characterization.
In order to formulate the characterization in a tractable way, we define the set 
\begin{equation}\label{eq:Delta-representation-Delta-structured}
    \mathbf{\Delta} 
    \coloneqq 
    \left\{
        \Delta \in \bbR^{mN\times m}
    \middle|
        \begin{bmatrix}
            \Delta \\ I
        \end{bmatrix}^\top 
        \Pi_{\Delta}
        \begin{bmatrix}
            \Delta \\ I
        \end{bmatrix}
        \succeq 0
        \;\forall\,
        \Pi_{\Delta} \in \mathbf{\Pi}_{\Delta}
    \right\}
\end{equation}
for some multiplier class $\mathbf{\Pi}_{\Delta}$ defined in the following as a convex cone of symmetric matrices.
Incorporating the knowledge that the uncertainty in~\eqref{eq:LFR-bilinear-open-loop} is of the form $I_m\kron z$ with $z\in\cZ$, we choose $\mathbf{\Pi}_\Delta$ such that $(I_m\kron z)\in\mathbf{\Delta}$.
To this end, we propose to choose the multiplier class via the LMI representation%
\begin{equation}\label{eq:Delta-representation-Pi-structured}\small
    \mathbf{\Pi}_{\Delta} \coloneqq 
    \left\{
        \Pi_{\Delta} 
    \middle|
        \Pi_{\Delta} = \begin{bmatrix}
            \Lambda \kron Q_z & \Lambda \kron S_z \\
            \Lambda \kron S_z^\top & \Lambda \kron R_z
        \end{bmatrix}, 0\preceq\Lambda\in\bbR^{m\times m}
    \right\}.
\end{equation}
Inspired by~\cite[Prop. 1]{berberich:scherer:allgower:2022}, we make the following observation.
\begin{proposition}\label{prop:multiplier-tightness}
    Let $\mathbf{\Pi}_{\Delta}$ be as in~\eqref{eq:Delta-representation-Pi-structured}. 
    Then, $\Delta\in\mathbf{\Delta}$ if and only if $\Delta=I_m \kron z$ with $z\in\cZ$.
\end{proposition}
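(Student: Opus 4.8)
The plan is to prove the two implications separately, with the "if" direction being essentially a verification and the "only if" direction requiring the real work. For the \emph{if} direction, suppose $\Delta = I_m \kron z$ with $z \in \cZ$. I would pick an arbitrary $\Pi_\Delta \in \mathbf{\Pi}_\Delta$, so $\Pi_\Delta$ has the Kronecker-structured form with some $\Lambda \succeq 0$. Then I would compute
\begin{equation*}
    \begin{bmatrix} I_m \kron z \\ I_m \end{bmatrix}^\top
    \begin{bmatrix} \Lambda \kron Q_z & \Lambda \kron S_z \\ \Lambda \kron S_z^\top & \Lambda \kron R_z \end{bmatrix}
    \begin{bmatrix} I_m \kron z \\ I_m \end{bmatrix}
\end{equation*}
using mixed-product and bilinearity properties of the Kronecker product. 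The key identity is that $\begin{bmatrix} I_m \kron z \\ I_m \end{bmatrix} = \begin{bmatrix} I_m \kron z \\ I_m \kron 1\end{bmatrix}$ can be rewritten so that the quadratic form factors as $\Lambda \kron \left( \begin{bmatrix} z \\ 1\end{bmatrix}^\top \begin{bmatrix} Q_z & S_z \\ S_z^\top & R_z \end{bmatrix} \begin{bmatrix} z \\ 1 \end{bmatrix}\right)$, which is the Kronecker product of the PSD matrix $\Lambda$ with the nonnegative scalar from~\eqref{eq:condition-z-in-Z}, hence $\succeq 0$. Since $\Pi_\Delta$ was arbitrary, $\Delta \in \mathbf{\Delta}$.

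For the \emph{only if} direction, suppose $\Delta \in \mathbf{\Delta}$, i.e., $\begin{bmatrix}\Delta \\ I\end{bmatrix}^\top \Pi_\Delta \begin{bmatrix}\Delta \\ I\end{bmatrix} \succeq 0$ for all $\Pi_\Delta \in \mathbf{\Pi}_\Delta$. The strategy is to test against well-chosen multipliers. First I would take $\Lambda = e_k e_k^\top$ for each standard basis vector $e_k \in \bbR^m$ — this is a rank-one PSD matrix, so the resulting $\Pi_\Delta$ is admissible. Writing $\Delta$ in block form as $\Delta = \begin{bmatrix} \Delta_1^\top & \cdots & \Delta_m^\top \end{bmatrix}^\top$ with each $\Delta_i \in \bbR^{N \times m}$, and denoting the $k$-th column of $\Delta_i$ by $\delta_{i,k}$, the quadratic form against $\Lambda = e_k e_k^\top$ should collapse to a condition forcing $\begin{bmatrix} \delta_{k,k} \\ 1 \end{bmatrix}^\top \begin{bmatrix} Q_z & S_z \\ S_z^\top & R_z \end{bmatrix} \begin{bmatrix} \delta_{k,k} \\ 1 \end{bmatrix} \succeq 0$ (a scalar), which places $\delta_{k,k} \in \cZ$, while the off-diagonal blocks must vanish because $Q_z \prec 0$: more precisely, testing with $\Lambda = (e_k + e_l)(e_k+e_l)^\top$ and comparing with the $e_k e_k^\top$ and $e_l e_l^\top$ cases yields a coupling inequality that, combined with $Q_z \prec 0$, forces $\delta_{i,k} = \delta_{j,l}$ for all relevant indices and simultaneously identifies this common vector, call it $z$, as lying in $\cZ$. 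The cleanest route is probably: use diagonal $\Lambda$'s to pin down each "diagonal" column to lie in $\cZ$ and each column of $\Delta_i$ (for $i\ne k$) to be fixed, then use the general rank-one $\Lambda = v v^\top$ (equivalently, congruence by $v \kron I$) to reduce to the scalar case $m=1$ analyzed in~\cite[Prop.~1]{berberich:scherer:allgower:2022}, from which $\Delta = I_m \kron z$ follows by matching entries.

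The main obstacle is the bookkeeping in the only-if direction: the Kronecker structure $I_m \kron z$ has a very particular sparsity (block $i$ of $\Delta$ must be $z$ times the $i$-th unit row pattern — actually $(I_m \kron z)$ has block rows that are $e_i^\top \kron z = $ the matrix with $z$ in column $i$ and zeros elsewhere), so I need the multiplier tests to simultaneously (i) kill all "wrong" entries and (ii) force the surviving entries across different blocks to agree. I expect that congruence transformations of the form $(v \kron I_{N+1})$ applied to the defining inequality, for varying $v \in \bbR^m$, reduce everything to the one-dimensional statement, so the bulk of the argument can be quoted from~\cite{berberich:scherer:allgower:2022} once the reduction is set up; getting that reduction stated cleanly, with the Kronecker identities $(v^\top \kron I)(I_m \kron z) = v^\top \kron z = (v^\top \otimes z)$ handled carefully, is where I would spend most of the effort.
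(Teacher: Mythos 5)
Your proposal is correct and follows essentially the same route as the paper: the "if" direction is the identical Kronecker factorization of the quadratic form into $\Lambda\kron(\cdot)$, and the "only if" direction uses the same family of rank-one multiplier tests — $\Lambda=e_ke_k^\top$ to kill the off-diagonal columns via $Q_z\prec 0$ and to place the diagonal vectors in $\cZ$, then $\Lambda=(e_k+e_l)(e_k+e_l)^\top$ sandwiched with $e_k-e_l$ (the paper's mirror-image choice $(e_i-e_k)(e_i-e_k)^\top$ sandwiched with $e_i+e_k$) to force the diagonal vectors to coincide. The only cosmetic difference is that the paper extracts $z\in\cZ$ in a separate third step with $\Lambda=I_m$, whereas you read it off the $(k,k)$ entry of the $e_ke_k^\top$ test, which also works.
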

\begin{proof}
    \emph{"If":~}
    Suppose $z\in\cZ$, i.e.,~\eqref{eq:condition-z-in-Z} holds, and let $0\preceq \Lambda \in \bbR^{m\times m}$ be arbitrary. Then, $(I_m\kron z)\in\bbR^{mN\times m}$ satisfies  
    \begin{multline*}
        \begin{bmatrix}
            (I_m\kron z) \\ I
        \end{bmatrix}^\top 
        \begin{bmatrix}
            \Lambda \kron Q_z & \Lambda \kron S_z \\
            \Lambda \kron S_z^\top & \Lambda \kron R_z
        \end{bmatrix}
        \begin{bmatrix}
            (I_m\kron z) \\ I
        \end{bmatrix}
        \\
        = \Lambda \kron \left(
            \begin{bmatrix}
                z \\ 1
            \end{bmatrix}^\top
            \begin{bmatrix}
                Q_z & S_z \\ S_z^\top & R_z
            \end{bmatrix}
            \begin{bmatrix}
                z \\ 1
            \end{bmatrix}
        \right)
        \succeq 0
    \end{multline*}
    due to $\Lambda\succeq 0$ and~\eqref{eq:condition-z-in-Z}, where we use the spectral property of the Kronecker product that $A\kron B$ has eigenvalues $\lambda_i\mu_j$, $i=1,...,m$, $j=1,...,n$ if $A\in\bbR^{m\times m}$ has eigenvalues $\{\lambda_i\}_{i=1}^m$ and $B\in\bbR^{n\times n}$ has eigenvalues $\{\mu_j\}_{j=1}^n$. Since $\Lambda\succeq 0$ was arbitrary, this shows $(I_m\kron z)\in\mathbf{\Delta}$.

    \emph{"Only if":~}
    See Appendix~\ref{sec:appendix-proof-prop-multipliers}. 
\end{proof}

This proposition shows that we can restrict ourselves to unstructured uncertainties $\Delta\in\mathbf{\Delta}$ in the LFR~\eqref{eq:LFR-bilinear-open-loop} using the multiplier class $\mathbf{\Pi}_{\Delta}$ since the knowledge about the structure of $(I_m\kron z)\in\mathbf{\Delta}$ is already incorporated in $\mathbf{\Pi}_{\Delta}$. 
In particular, any $\Delta\in\mathbf{\Delta}$ is necessarily of the required form $\Delta=I_m\kron z$ with $z\in\cZ$. 
Hence, $\mathbf{\Delta}$ with multiplier class $\mathbf{\Pi}_{\Delta}$ exploits the structure of the uncertainty without additional conservatism enabling a controller design for multi-dimensional inputs. 
For scalar inputs, i.e., $m=1$, the uncertainty description $\mathbf{\Delta}$ reduces to $\cZ$, i.e., $\mathbf{\Delta}=\cZ$. 

\section{CONTROLLER DESIGN FOR BILINEAR SYSTEMS}\label{sec:controller-design-bilinear}
Next, we exploit the LFR derived in the last section to design controllers for bilinear systems. To this end, we first establish conditions for closed-loop stability using linear state feedback in Section~\ref{sec:controller-state-feedback}. In Section~\ref{sec:controller-gain-scheduling}, we use arguments from gain-scheduling to derive an improved controller design approach with higher flexibility and reduced conservatism. 

\subsection{State-feedback control law}\label{sec:controller-state-feedback}
We use the LFR representation~\eqref{eq:LFR-bilinear-open-loop} to design a stabilizing state-feedback controller $u=Kz$, $K\in\bbR^{m\times N}$. Plugging the control law into~\eqref{eq:LFR-bilinear-open-loop}, we obtain the closed-loop LFR 
\begin{subequations}\label{eq:LFR-bilinear-nominal}
    \begin{align}
        \begin{bmatrix}
            z_+ \\ u
        \end{bmatrix}
        &= \begin{bmatrix}
            A+B_0 K & \tB \\
            K & 0
        \end{bmatrix}
        \begin{bmatrix}
            z \\ w
        \end{bmatrix},\\
        w &= (I_m\kron z) u,
    \end{align}
\end{subequations}
where $(I_m\kron z) \in \mathbf{\Delta}$.
The following theorem establishes closed-loop stability of the bilinear system~\eqref{eq:dynamics-bilinear} exploiting the structure of the uncertainty set $\mathbf{\Delta}$.

\begin{theorem}\label{thm:stability-condition-LFR-nominal}
    If there exist a symmetric $N\times N$ matrix ${P}=P^\top\succ 0$, a matrix ${L}\in\bbR^{m\times N}$, a symmetric $m\times m$ matrix $\tilde{\Lambda}=\tilde{\Lambda}\succ 0$, and a scalar $\nu> 0$ such that $\cQ \succ 0$, where
    \begin{equation*}
        \cQ =
        \begin{bmatrix}
            {P}
            & -\tB (\tilde{\Lambda}\kron\tS_z)
            & A {P} + B_0 L
            & \tB(\tilde{\Lambda}\kron \tQ_z)
            \\
            \star
            & \tilde{\Lambda}\kron\tR_z
            & L
            & 0
            \\
            \star 
            & \star
            & {P} 
            & 0
            \\
            \star 
            & \star
            & \star
            & -\tilde{\Lambda}\kron\tQ_z
        \end{bmatrix},
    \end{equation*}
    and 
    \begin{equation}\label{eq:stability-condition-LFR-nominal-invariance}
        \begin{bmatrix}
            \nu \tQ_z + P & - \nu \tS_z \\ -\nu \tS_z^\top & \nu \tR_z - 1
        \end{bmatrix}
        \preceq 0,
    \end{equation}
    then $\cZ_\mathrm{RoA} = \{z\in\bbR^N \mid z^\top P^{-1} z \leq 1\}\subseteq \cZ$ and the controller $u(z) = L {P}^{-1} z$ locally asymptotically stabilizes system~\eqref{eq:dynamics-bilinear} for all initial conditions $z_0\in\cZ_\mathrm{RoA}$.
\end{theorem}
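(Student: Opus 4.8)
The strategy is a standard dissipativity/S-procedure argument for the LFR~\eqref{eq:LFR-bilinear-nominal}, combined with the exactness of the multiplier class from Proposition~\ref{prop:multiplier-tightness} and a sublevel-set invariance argument. I would first claim that $V(z) = z^\top P^{-1} z$ is a local Lyapunov function on $\cZ_\mathrm{RoA}$. To see this, apply a congruence transformation to $\cQ \succ 0$: conjugate by $\mathrm{diag}(P^{-1}, (\tilde\Lambda\kron\tR_z)^{-1}\text{-type blocks}, P^{-1}, \ldots)$, or more cleanly, recognize $\cQ$ as the "dual" (controller-synthesis) form of an analysis LMI in the variables $X = P^{-1}$, $K = LP^{-1}$, and a multiplier $\Lambda$ obtained from $\tilde\Lambda$. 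The key manipulations are: (i) undo the Kronecker-inverse bookkeeping, using that $(\tilde\Lambda\kron\tQ_z)$, $(\tilde\Lambda\kron\tS_z)$, $(\tilde\Lambda\kron\tR_z)$ assemble into the inverse of a multiplier $\Pi_\Delta \in \mathbf{\Pi}_\Delta$ built from some $\Lambda \succeq 0$ (this uses the block-inverse formula together with the fact that the inverse of $\begin{bmatrix}Q_z & S_z\\ S_z^\top & R_z\end{bmatrix}$ is $\begin{bmatrix}\tQ_z & \tS_z\\ \tS_z^\top & \tR_z\end{bmatrix}$, Kronecker'd with $\tilde\Lambda$); (ii) a Schur complement on the last block to eliminate $-\tilde\Lambda\kron\tQ_z \succ 0$ (positive definite since $\tilde\Lambda \succ 0$ and $\tQ_z \succ 0$, the latter because $Q_z \prec 0$ forces... actually one checks the sign from the inverse).

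**Dissipation inequality.** After the Schur complement and congruence, the resulting inequality should read, for all $z$ and all $w$ satisfying the quadratic constraint encoded by the multiplier,
\[
    \begin{bmatrix} z \\ w \end{bmatrix}^\top
    \left(
    [\star]^\top P^{-1}
    \begin{bmatrix} A + B_0 K & \tB \\ K & 0 \end{bmatrix}
    - \begin{bmatrix} P^{-1} & 0 \\ 0 & 0 \end{bmatrix}
    + \begin{bmatrix} (I_m\kron z)^\top \\ I \end{bmatrix}\text{-multiplier term}
    \right)
    \begin{bmatrix} z \\ w \end{bmatrix} < 0.
\]
Then for any trajectory with $z \in \cZ$, Proposition~\ref{prop:multiplier-tightness} guarantees $(I_m\kron z) \in \mathbf{\Delta}$, so the multiplier term is $\geq 0$; substituting $w = (I_m\kron z)u = (I_m\kron z)Kz$ from~\eqref{eq:LFR-bilinear-nominal} and using $z_+ = (A+B_0K)z + \tB w$ yields $z_+^\top P^{-1} z_+ - z^\top P^{-1} z < 0$ whenever $z \neq 0$ and $z \in \cZ$. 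Hence $V$ strictly decreases along trajectories that remain in $\cZ$.

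**Invariance and conclusion.** It remains to show $\cZ_\mathrm{RoA} \subseteq \cZ$ so that trajectories starting in $\cZ_\mathrm{RoA}$ stay there (by the sublevel-set property of a decreasing Lyapunov function) and hence in $\cZ$. This is exactly the role of~\eqref{eq:stability-condition-LFR-nominal-invariance}: applying the Schur complement / S-procedure to that LMI with the multiplier $\nu > 0$ shows that $z^\top P^{-1} z \leq 1$ implies~\eqref{eq:condition-z-in-Z}, i.e. $z \in \cZ$. Concretely, $\begin{bmatrix}\nu\tQ_z + P & -\nu\tS_z\\ -\nu\tS_z^\top & \nu\tR_z - 1\end{bmatrix} \preceq 0$ rearranges (again via the block-inverse identity relating $(\tQ_z,\tS_z,\tR_z)$ to $(Q_z,S_z,R_z)$, and a Schur complement on $P$) into the statement $z^\top P^{-1}z \leq 1 \Rightarrow \begin{bmatrix}z\\1\end{bmatrix}^\top\begin{bmatrix}Q_z & S_z\\ S_z^\top & R_z\end{bmatrix}\begin{bmatrix}z\\1\end{bmatrix} \geq 0$. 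Combining: $\cZ_\mathrm{RoA}$ is a positively invariant compact sublevel set contained in $\cZ$, $V$ is positive definite and strictly decreasing on it, so by the discrete-time Lyapunov theorem the origin is asymptotically stable with region of attraction containing $\cZ_\mathrm{RoA}$, and the stabilizing feedback is $u = Kz = LP^{-1}z$.

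**Main obstacle.** The routine-but-delicate part is step~(i): correctly tracking the Kronecker products through the block-matrix inversion to confirm that the blocks appearing in $\cQ$ (with $\tilde\Lambda$ and the tilde-data) really are the congruence/inverse image of a bona fide analysis LMI with a multiplier $\Pi_\Delta \in \mathbf{\Pi}_\Delta$ — in particular that $\tilde\Lambda \succ 0$ translates to the required $\Lambda \succeq 0$ and that all the sign conditions ($-\tilde\Lambda\kron\tQ_z \succ 0$, etc.) hold given $Q_z \prec 0$, $R_z \succ 0$. Everything else (Schur complements, the S-procedure step, invoking Proposition~\ref{prop:multiplier-tightness}, the Lyapunov conclusion) is mechanical.
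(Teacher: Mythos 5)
Your plan is correct and follows essentially the same route as the paper, which omits a separate proof of this theorem and instead proves the gain-scheduling generalization (Theorem~\ref{thm:stability-condition-LFR-gain-scheduling}) --- two Schur complements, reassembly of the Kronecker blocks into $\Pi_{\Delta}^{-1}$, passage to the analysis LMI in $P^{-1}$, Proposition~\ref{prop:multiplier-tightness} for exactness of the multiplier class, and the S-procedure step for $\cZ_\mathrm{RoA}\subseteq\cZ$ --- of which your argument is precisely the special case $K_w=0$. The only step you leave vague, converting $\cQ\succ 0$ into the analysis inequality, is not a plain congruence but an application of the dualization lemma \cite[Lm.~4.9]{scherer:weiland:2000}, which is exactly how the paper resolves your ``main obstacle'' (and note $\tQ_z\prec 0$, not $\tQ_z\succ 0$, so that $-\tilde{\Lambda}\kron\tQ_z\succ 0$ as required for the Schur complement).
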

\begin{proof}
    Since we later prove closed-loop stability of the bilinear system in a more general framework that includes a linear state-feedback design as a special case, we omit the proof at this point.
\end{proof}
The controller design in Theorem~\ref{thm:stability-condition-LFR-nominal} achieves local asymptotic stability with guaranteed region of attraction $\cZ_\mathrm{RoA}$. 
A similar condition for closed-loop stability of bilinear systems with scalar input signals ($m=1$) is given in~\cite[Thm.~1]{khlebnikov:2018} based on Petersen's lemma and in ~\cite[Thm. 4]{strasser:berberich:allgower:2023} using the LFR framework. 
Thus, Theorem~\ref{thm:stability-condition-LFR-nominal} extends this result to general input signals ($m\geq 1$) based on the LFR~\eqref{eq:LFR-bilinear-nominal} with the uncertainty characterization derived in Section~\ref{sec:LFR-nominal}.

\subsection{Gain-scheduling controller using the uncertainty}\label{sec:controller-gain-scheduling}
In this section, we employ gain-scheduling techniques~\cite{scherer:2001} to enhance the design in Theorem~\ref{thm:stability-condition-LFR-nominal} leading to a more flexible controller improving the feasibility and the closed-loop behavior. 
In particular, we design a controller which additionally depends on the uncertainty channel of the LFR via $w$, i.e., 
\begin{equation}\label{eq:controller-gain-scheduling} 
    u(z) = Kz + K_w w
\end{equation}
where $K\in\bbR^{m\times N}$ and $K_w\in\bbR^{m\times Nm}$. Substituting the input~\eqref{eq:controller-gain-scheduling} in~\eqref{eq:LFR-bilinear-open-loop}, we obtain the corresponding closed-loop LFR
\begin{subequations}\label{eq:LFR-bilinear-gain-scheduling}
    \begin{align}
        \begin{bmatrix}
            z_+ \\ u
        \end{bmatrix}
        &= \begin{bmatrix}
            A+B_0 K & \tB + B_0 K_w \\
            K & K_w
        \end{bmatrix}
        \begin{bmatrix}
            z \\ w
        \end{bmatrix},\\
        w &= (I_m\kron z)  u
    \end{align}
\end{subequations}
with $(I_m\kron z)\in\mathbf{\Delta}$. The following theorem establishes a controller design method guaranteeing stability of the closed-loop system~\eqref{eq:LFR-bilinear-gain-scheduling}.

\begin{theorem}\label{thm:stability-condition-LFR-gain-scheduling}
    If there exist a symmetric $N\times N$ matrix ${P}=P^\top\succ 0$, matrices ${L}\in\bbR^{m\times N}$, $L_w\in\bbR^{m\times Nm}$, a symmetric $m\times m$ matrix $\tilde{\Lambda}=\tilde{\Lambda}\succ 0$, and a scalar $\nu> 0$ such that $\cQ_\mathrm{GS}\succ 0$ and~\eqref{eq:stability-condition-LFR-nominal-invariance} holds, where
    \small
    \begin{equation*}
        \cQ_\mathrm{GS} = \cQ + 
        \begin{bmatrix}
            0
            & - B_0 L_w(I_m\kron \hS_z) 
            & 0
            & B_0 L_w
            \\
            \star
            & - L_w (I_m\kron\hS_z) - (I_m\kron\hS_z^\top) L_w^\top
            & 0
            & L_w
            \\
            \star 
            & \star
            & 0
            & 0
            \\
            \star 
            & \star
            & \star
            & 0
        \end{bmatrix}
    \end{equation*}
    \normalsize
    with $\hS_z = \tQ_z^{-1}\tS_z$,
    then $\cZ_\mathrm{RoA} = \{z\in\bbR^N \mid z^\top P^{-1} z \leq 1\}\subseteq \cZ$ and the controller
    \begin{equation}\label{eq:controller-gain-scheduling-explicit} 
        u(z) = (I - L_w(\tilde{\Lambda}^{-1} \kron \tQ_z^{-1} z))^{-1} LP^{-1} z
    \end{equation}
    locally asymptotically stabilizes system~\eqref{eq:dynamics-bilinear} for all initial conditions $z_0\in\cZ_\mathrm{RoA}$.
\end{theorem}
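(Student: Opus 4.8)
The plan is to read $\cQ_\mathrm{GS}\succ 0$ as a gain-scheduling synthesis form of a robust dissipation inequality for the closed-loop LFR~\eqref{eq:LFR-bilinear-gain-scheduling}, and then to combine it with Proposition~\ref{prop:multiplier-tightness} and the set inclusion implied by~\eqref{eq:stability-condition-LFR-nominal-invariance} to certify that $V(z)=z^\top P^{-1}z$ is a strict Lyapunov function on $\cZ_\mathrm{RoA}=\{z:V(z)\le 1\}$.

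First I would fix the change of variables $X=P^{-1}$, $\Lambda=\tilde\Lambda^{-1}$, $K=LP^{-1}$ and $K_w=L_w(\tilde\Lambda^{-1}\kron\tQ_z^{-1})$. Since $(A_1\kron A_2)(A_3\kron A_4)=(A_1A_3)\kron(A_2A_4)$, one has $K_w(I_m\kron z)=L_w(\tilde\Lambda^{-1}\kron\tQ_z^{-1}z)$, so the explicit controller~\eqref{eq:controller-gain-scheduling-explicit} equals $u=(I-K_w(I_m\kron z))^{-1}Kz$, i.e., the well-posed solution of the algebraic loop $u=Kz+K_ww$, $w=(I_m\kron z)u$ of~\eqref{eq:LFR-bilinear-gain-scheduling}; nonsingularity of $I-K_w(I_m\kron z)$ for $z\in\cZ$ follows, as usual, from the dissipation inequality below together with $\begin{bmatrix}I_m\kron z\\ I\end{bmatrix}^\top\Pi_\Delta\begin{bmatrix}I_m\kron z\\ I\end{bmatrix}\succeq 0$ (Proposition~\ref{prop:multiplier-tightness}), whose $w$-diagonal block being negative definite precludes a nontrivial loop kernel. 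Moreover, using $\tQ_z\hS_z=\tS_z$, the terms distinguishing $\cQ_\mathrm{GS}$ from $\cQ$ simplify to $B_0K_w(\tilde\Lambda\kron\tS_z)$, $B_0K_w(\tilde\Lambda\kron\tQ_z)$, $K_w(\tilde\Lambda\kron\tS_z)$ and $K_w(\tilde\Lambda\kron\tQ_z)$; adding them to the matching blocks of $\cQ$ exactly replaces $\tB$ by the closed-loop input matrix $\tB+B_0K_w$ and inserts the feedthrough $K_w$ into the (dual) multiplier blocks, which is the bookkeeping that keeps $\cQ_\mathrm{GS}$ affine in the decision variables.

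The core step is to show that $\cQ_\mathrm{GS}\succ 0$, after these substitutions, is equivalent to the dissipation inequality
\begin{equation*}
    [\star]^\top
    \begin{bmatrix}
        -P^{-1} & 0 & 0 & 0\\
        0 & P^{-1} & 0 & 0\\
        0 & 0 & \Lambda\kron Q_z & \Lambda\kron S_z\\
        0 & 0 & \Lambda\kron S_z^\top & \Lambda\kron R_z
    \end{bmatrix}
    \begin{bmatrix}
        I & 0\\
        A+B_0K & \tB+B_0K_w\\
        0 & I\\
        K & K_w
    \end{bmatrix}
    \prec 0 ,
\end{equation*}
which reads $V(z_+)-V(z)+\begin{bmatrix}w\\ u\end{bmatrix}^\top\Pi_\Delta\begin{bmatrix}w\\ u\end{bmatrix}<0$ for all $\begin{bmatrix}z\\ w\end{bmatrix}\ne 0$ along~\eqref{eq:LFR-bilinear-gain-scheduling}, with $\Pi_\Delta\in\mathbf{\Pi}_{\Delta}$ generated by $\Lambda\succ 0$. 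I would obtain this equivalence by a congruence transformation conjugating the state blocks by $P^{-1}$, a Schur complement eliminating the Lyapunov-slack block $P=X^{-1}$ (legitimate since $X\succ 0$), and the dualization lemma (cf.~\cite{scherer:2001}) applied to the indefinite full-block multiplier, which trades the dual multiplier $\Pi_\Delta^{-1}$ -- whose blocks $\tilde\Lambda\kron\tQ_z$, $\tilde\Lambda\kron\tS_z$, $\tilde\Lambda\kron\tR_z$ are exactly the ones appearing in $\cQ$ -- for $\Pi_\Delta$ itself; the inertia hypothesis of the dualization lemma holds because $Q_z\prec 0$ and $R_z\succ 0$ fix the inertia of $\Pi_\Delta$. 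I expect this chain of congruence, Schur, and dualization steps, carried out while tracking the Kronecker structure, to be the main technical obstacle -- noticeably heavier than the linear-feedback case $K_w=0$ of Theorem~\ref{thm:stability-condition-LFR-nominal}, which it specializes to.

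It remains to assemble the conclusion. From~\eqref{eq:stability-condition-LFR-nominal-invariance} I would obtain $\cZ_\mathrm{RoA}\subseteq\cZ$ via the S-procedure together with an inversion argument exploiting $\bigl[\begin{smallmatrix}\tQ_z&\tS_z\\ \tS_z^\top&\tR_z\end{smallmatrix}\bigr]=\bigl[\begin{smallmatrix}Q_z&S_z\\ S_z^\top&R_z\end{smallmatrix}\bigr]^{-1}$: rewriting~\eqref{eq:stability-condition-LFR-nominal-invariance} as $\bigl[\begin{smallmatrix}-P&0\\ 0&1\end{smallmatrix}\bigr]\succeq\nu\bigl[\begin{smallmatrix}\tQ_z&-\tS_z\\ -\tS_z^\top&\tR_z\end{smallmatrix}\bigr]$ and noting that both sides are the inverses of $\bigl[\begin{smallmatrix}-P^{-1}&0\\ 0&1\end{smallmatrix}\bigr]$ and of $\tfrac1\nu\bigl[\begin{smallmatrix}Q_z&-S_z\\ -S_z^\top&R_z\end{smallmatrix}\bigr]$, which share the same inertia, a monotone-inversion argument at fixed inertia gives $\bigl[\begin{smallmatrix}Q_z&S_z\\ S_z^\top&R_z\end{smallmatrix}\bigr]\succeq\nu\bigl[\begin{smallmatrix}-P^{-1}&0\\ 0&1\end{smallmatrix}\bigr]$, hence $\begin{bmatrix}z\\ 1\end{bmatrix}^\top\bigl[\begin{smallmatrix}Q_z&S_z\\ S_z^\top&R_z\end{smallmatrix}\bigr]\begin{bmatrix}z\\ 1\end{bmatrix}\ge\nu(1-z^\top P^{-1}z)\ge 0$ for $z\in\cZ_\mathrm{RoA}$. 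Now take $z_0\in\cZ_\mathrm{RoA}$ and argue by induction: if $z\in\cZ_\mathrm{RoA}\subseteq\cZ$, then $I_m\kron z\in\mathbf{\Delta}$ by Proposition~\ref{prop:multiplier-tightness}, so $\begin{bmatrix}w\\ u\end{bmatrix}^\top\Pi_\Delta\begin{bmatrix}w\\ u\end{bmatrix}=u^\top\begin{bmatrix}I_m\kron z\\ I\end{bmatrix}^\top\Pi_\Delta\begin{bmatrix}I_m\kron z\\ I\end{bmatrix}u\ge 0$ along the loop, and the dissipation inequality yields $V(z_+)-V(z)\le-\begin{bmatrix}w\\ u\end{bmatrix}^\top\Pi_\Delta\begin{bmatrix}w\\ u\end{bmatrix}\le 0$, strictly for $z\ne 0$; in particular $V(z_+)<V(z)\le 1$, hence $z_+\in\cZ_\mathrm{RoA}\subseteq\cZ$, and the induction closes ($z=0$ being trivial). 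Thus $\cZ_\mathrm{RoA}$ is forward invariant and $V$ is a strict Lyapunov function on it, which gives local asymptotic stability of the origin of~\eqref{eq:dynamics-bilinear} in closed loop with region of attraction containing $\cZ_\mathrm{RoA}$. Setting $L_w=0$ throughout recovers Theorem~\ref{thm:stability-condition-LFR-nominal}.
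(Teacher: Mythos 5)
Your proposal is correct and follows essentially the same route as the paper's proof: the change of variables $K=LP^{-1}$, $K_w=L_w(\tilde{\Lambda}^{-1}\kron\tQ_z^{-1})$, double Schur complement plus the dualization lemma to pass from $\cQ_\mathrm{GS}\succ 0$ to the primal dissipation inequality with multiplier $\Pi_\Delta$, the dualization/S-procedure argument for $\cZ_\mathrm{RoA}\subseteq\cZ$ from~\eqref{eq:stability-condition-LFR-nominal-invariance}, and invariance plus strict Lyapunov decrease via Proposition~\ref{prop:multiplier-tightness}. The only cosmetic difference is that you unfold the appeal to the gain-scheduling theorem of~\cite{scherer:2001} (well-posedness of the algebraic loop and $\Delta V<0$) into an explicit loop-kernel argument, which is exactly what that result packages.
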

\begin{proof}
    We define $K = L P^{-1}$ and $K_w = L_w(\tilde{\Lambda}^{-1} \kron \tQ_z^{-1})$. Then, by using the Schur complement twice, $\cQ_\mathrm{GS}\succ 0$ is equivalent to 
    \begin{multline}\label{eq:proof-gain-scheduling-Schur}
        \begin{bmatrix}
            P
            & -(\tB + B_0 K_w) (\tilde{\Lambda}\kron\tS_z)
            \\
            \star
            &  \tilde{\Lambda}\kron\tR_z - K_w (\tilde{\Lambda}\kron\tS_z) - (\tilde{\Lambda}\kron\tS_z^\top) K_w^\top
        \end{bmatrix}
        \\
        + 
        \begin{bmatrix}
            \tB + B_0 K_w \\ K_w
        \end{bmatrix}
        (\tilde{\Lambda}\kron\tQ_z)
        \begin{bmatrix}
            \tB + B_0 K_w \\ K_w
        \end{bmatrix}^\top 
        \\
        - 
        \begin{bmatrix}
            A+B_0 K \\ K
        \end{bmatrix}
        % \begin{bmatrix}
        %     \star
        % \end{bmatrix}
        P
        \begin{bmatrix}
            A+B_0 K \\ K
        \end{bmatrix}^\top 
        \succ 0.
    \end{multline}
    Note that 
    \begin{multline*}
        \begin{bmatrix}
            P & 0 \\ 0 & 0
        \end{bmatrix}
        - \begin{bmatrix}
            A+B_0 K \\ K
        \end{bmatrix}
        P
        \begin{bmatrix}
            A+B_0 K \\ K
        \end{bmatrix}^\top
        \\
        = \begin{bmatrix}
            \star
        \end{bmatrix}^\top
        \begin{bmatrix}
            -P & 0 \\ 0 & P
        \end{bmatrix}
        \begin{bmatrix}
            (A+B_0K)^\top & K^\top \\ -I & 0
        \end{bmatrix}
    \end{multline*}
    and
    \begin{multline*}
        \begin{bmatrix}
            0
            & - (\tB+B_0K_w) (\tilde{\Lambda}\kron\tS_z)
            \\
            \star
            & \tilde{\Lambda}\kron\tR_z - K_w (\tilde{\Lambda}\kron\tS_z) - (\tilde{\Lambda}\kron\tS_z^\top) K_w^\top
        \end{bmatrix}
        \\
        + \begin{bmatrix}
            \tB + B_0 K_w \\ K_w
        \end{bmatrix}
        (\tilde{\Lambda}\kron\tQ_z)
        \begin{bmatrix}
            \tB + B_0 K_w \\ K_w
        \end{bmatrix}^\top 
        \\
        = \begin{bmatrix}
           \star
        \end{bmatrix}^\top 
        \begin{bmatrix}
            \tilde{\Lambda}\kron\tQ_z & \tilde{\Lambda}\kron\tS_z \\ \tilde{\Lambda}\kron\tS_z^\top & \tilde{\Lambda}\kron\tR_z
        \end{bmatrix}
        \begin{bmatrix}
            (\tB + B_0K_w)^\top & K_w^\top \\ 0 & -I
        \end{bmatrix}.
    \end{multline*}
    Moreover, we recall~\eqref{eq:multiplier-trafo} and observe that $T$ is unitary, i.e., $T^{-1}=T^\top$. Then, we directly obtain 
    \begin{equation*}
        \Pi_{\Delta}^{-1} 
        = T \left(
            \tilde{\Lambda} \kron \begin{bmatrix} \tQ_z & \tS_z \\ \tS_z^\top & \tR_z \end{bmatrix}
        \right) T^\top
    \end{equation*}
    using $(U\kron V)^{-1} = U^{-1} \kron V^{-1}$. By the definition of $T$ we deduce that 
    \begin{equation*}
        \Pi_{\Delta}^{-1}
        = \begin{bmatrix}
            \tilde{\Lambda} \kron \tQ_z & \tilde{\Lambda} \kron \tS_z \\
            \tilde{\Lambda} \kron \tS_z^\top & \tilde{\Lambda} \kron \tR_z
        \end{bmatrix}.
    \end{equation*}
    Thus, we write~\eqref{eq:proof-gain-scheduling-Schur} equivalently as
    \begin{equation*}\small
        \begin{bmatrix}
            \star
        \end{bmatrix}^\top
        \left[\def\arraystretch{1.15}\begin{array}{cc|cc}
            -P & 0 & 0 & 0 \\ 
            0 & P & 0 & 0 \\\hline
            0 & 0 & \multicolumn{2}{c}{\multirow{2}{*}{$\Pi_{\Delta}^{-1}$}} \\
            0 & 0
        \end{array}\right]
        \left[\def\arraystretch{1.15}\begin{array}{cc}
            (A+B_0K)^\top & K^\top \\ -I & 0 \\\hline
            (\tB+B_0K_w)^\top & K_w^\top \\ 0 & -I
        \end{array}\right]
        \succ 0.
    \end{equation*}
    Using the dualization lemma~\cite[Lm. 4.9]{scherer:weiland:2000}, we obtain
    \begin{equation*}\small
        \begin{bmatrix}
            \star
        \end{bmatrix}^\top
        \left[\def\arraystretch{1.15}\begin{array}{cc|cc}
            -\tilde{P} & 0 & 0 & 0 \\
            0 & \tilde{P} & 0 & 0 \\\hline
            0 & 0 & \multicolumn{2}{c}{\multirow{2}{*}{$\Pi_{\Delta}$}}\\
            0 & 0
        \end{array}\right]
        \left[\def\arraystretch{1.15}\begin{array}{cc}
            I & 0 \\
            A+B_0K & \tB+B_0K_w \\\hline
            0 & I \\
            K & K_w
        \end{array}\right]
        \prec 0,
    \end{equation*}
    where $\tilde{P}=P^{-1}$, $\Lambda=\tilde{\Lambda}^{-1}$. 
    Hence, we define the Lyapunov function $V(z) = z^\top \tilde{P} z$ and conclude $\Delta V(z) = z_+^\top \tilde{P} z_+ - z^\top \tilde{P} z < 0$ for the LFR in~\eqref{eq:LFR-bilinear-gain-scheduling} for all $(I_m\kron z)\in\mathbf{\Delta}\setminus\{0\}$ and the controller~\eqref{eq:controller-gain-scheduling} due to~\cite[Thm. 2]{scherer:2001}.
    Thus, the obtained controller guarantees $\Delta V(z) < 0$ for the bilinear system~\eqref{eq:dynamics-bilinear} for all $z\in\cZ\setminus\{0\}$ due to Proposition~\ref{prop:multiplier-tightness}.

    To show asymptotic stability for all $z\in\cZ_\mathrm{RoA}$, it remains to show $\cZ_\mathrm{RoA}\subseteq \cZ$ and positive invariance of $\cZ_\mathrm{RoA}$, i.e., $z_+\in\cZ_\mathrm{RoA}$ if $z\in\cZ_\mathrm{RoA}$. 
    For the set inclusion $\cZ_\mathrm{RoA}\subseteq\cZ$, note that~\eqref{eq:stability-condition-LFR-nominal-invariance} is equivalent to 
    \begin{equation*}
        \begin{bmatrix}\star\end{bmatrix}^\top
        \left[\def\arraystretch{1.15}\begin{array}{cc|cc}
            \tQ_z & \tS_z & 0 & 0 \\
            \tS_z^\top & \tR_z & 0 & 0 \\\hline
            0 & 0 & \frac{1}{\nu} P & 0 \\
            0 & 0 & 0 & -\frac{1}{\nu}
        \end{array}\right]
        \left[\def\arraystretch{1.15}\begin{array}{cc}
            -I & 0 \\
            0 & I \\\hline
            I & 0 \\
            0 & -I
        \end{array}\right]
        \preceq 0.
    \end{equation*}
    Using again the dualization lemma, this is equivalent to 
    \begin{equation*}
        \begin{bmatrix}
            Q_z & S_z \\ S_z^\top & R_z
        \end{bmatrix}
        - \nu \begin{bmatrix}
            - \tilde{P} & 0 \\ 0 & 1
        \end{bmatrix}
        \succeq 0.
    \end{equation*}
    Then, multiplying from left and right by $\begin{bmatrix}
        z^\top & 1
    \end{bmatrix}^\top$ and its transpose, respectively, and applying the S-procedure (cf.~\cite{scherer:weiland:2000,boyd:vandenberghe:2004}) results in $z\in\cZ$ for all $z\in\cZ_\mathrm{RoA}$, i.e., $\cZ_\mathrm{RoA}\subseteq \cZ$.    
    Positive invariance of $\cZ_\mathrm{RoA}$ can be directly deduced from the choice of $\cZ_\mathrm{RoA}$ as a sublevel set of the Lyapunov function $V(z)$ and $\Delta V(z)\leq 0$ for $z\in\cZ\supseteq\cZ_\mathrm{RoA}$. 
    Hence, we conclude that the obtained controller asymptotically stabilizes the bilinear system~\eqref{eq:dynamics-bilinear} for all $z\in\cZ_\mathrm{RoA}$. 
    
    Moreover, we show that the controller $u(z)$ is indeed of the form given in~\eqref{eq:controller-gain-scheduling-explicit}, i.e., it holds that
    \begin{align*}
        u(z) &= K z + K_w w(z)
        \;\overset{\eqref{eq:LFR-bilinear-gain-scheduling}}{{=}}\; Kz + K_w (I_m \kron z) u(z) 
        \\
        &= (I - K_w(I_m\kron z))^{-1} K z 
        \\
        &= (I - L_w(\tilde{\Lambda}^{-1} \kron \tQ_z^{-1}z))^{-1} LP^{-1} z 
        \\
        &\eqqcolon K_\mathrm{new}(z) z,
    \end{align*}
    where $K_\mathrm{new}: \bbR^N \to \bbR^{m\times N}$. We note that $(I - K_w(I_m\kron z))$ is non-singular for all $(I_m\kron z)\in\mathbf{\Delta}$ due to~\cite[Thm.~2]{scherer:2001} and, using Proposition~\ref{prop:multiplier-tightness}, for all $z\in\cZ_\mathrm{RoA}$.
\end{proof}
The design procedure in Theorem~\ref{thm:stability-condition-LFR-gain-scheduling} yields a more flexible state-feedback controller~\eqref{eq:controller-gain-scheduling-explicit} which is a rational function in the state $z$. 
The proposed approach uses gain-scheduling to exploit that we can measure the uncertainty $I_m\kron z$ in the LFR~\eqref{eq:LFR-bilinear-nominal}. 
Solving the resulting LMI feasibility condition yields a linear parameter-varying controller which enhances the resulting closed-loop behavior by reducing conservatism in the LFR of the bilinear system in comparison to a linear state feedback based on~\cite[Thm.~1]{khlebnikov:2018} ($m=1$) or Theorem~\ref{thm:stability-condition-LFR-nominal} ($m\geq 1$).
Note that the choice $L_w=0$ always reduces the controller to the \emph{linear} state-feedback case, i.e., $\cQ_\mathrm{GS}=\cQ$ and $u(z) = LP^{-1}z$. 
Thus, whenever the LMIs in Theorem~\ref{thm:stability-condition-LFR-nominal} (linear state-feedback) are feasible, the LMIs in Theorem~\ref{thm:stability-condition-LFR-gain-scheduling} (gain-scheduling) are feasible as well.
\begin{remark}
    Note that the controller~\eqref{eq:controller-gain-scheduling} is exposed to a \emph{known} uncertainty, namely the state $z$. 
    Controllers as in~\eqref{eq:controller-gain-scheduling} are commonly referred to as full-information feedback controllers (cf.~\cite{doyle:glover:khargonekar:francis:1989,packard:zhou:pandey:leonhardson:balas:1992,astolfi:1997}), being specialized forms of gain-scheduling controllers which can handle \emph{unknown} uncertainties in general. 
    In particular, the controller has access to both the state and the uncertainty.
    An interesting direction for future research contains the generalization to unknown uncertainties.
\end{remark}

\section{QUADRATIC PERFORMANCE}\label{sec:controller-performance}
Next, we include a performance goal in the controller design which needs to be satisfied by the resulting closed-loop system. In particular, we add a performance channel $w_p \mapsto z_p$ to system~\eqref{eq:dynamics-bilinear} and consider 
\begin{subequations}\label{eq:dynamics-bilinear-performance}
    \begin{align}
        z_+ &= Az + B_0 u + \tB(u\kron z) + B_{p} w_p, \\
        z_p &= C_p z + D_{p,u} u + \tD_{p,uz} (u\kron z) + D_{p,w} w_p,
    \end{align}
\end{subequations}
where $z_p\in\bbR^{p}$ and $w_p\in\bbR^{q}$.
Note that the performance output $z_p$ can depend bilinearly on the state $z$ and the input $u$, and linearly on the performance input $w_p$. As before, we write~\eqref{eq:dynamics-bilinear-performance} in closed loop with the gain-scheduling controller~\eqref{eq:controller-gain-scheduling} as the LFR%
\begin{subequations}\label{eq:LFR-bilinear-performance}
    \begin{align}
        \begin{bmatrix}
            z_+ \\ u \\ z_p
        \end{bmatrix}
        &= \begin{bmatrix}
            \cA & \cB & B_{p}\\
            K & K_w & 0 \\
            \cC & \cD & D_{p,w}
        \end{bmatrix}
        \begin{bmatrix}
            z \\ w \\ w_p
        \end{bmatrix},\\
        w &= (I_m\kron z) u
    \end{align}
\end{subequations}%
with $\cA = A+B_0K$, $\cB = \tB + B_0K_w$, $\cC = C_p + D_{p,u}K$, $\cD = \tD_{p,uz} + D_{p,u}K_w$, and $(I_m\kron z)\in\mathbf{\Delta}$. In the following, we consider a local quadratic performance specification.
\begin{definition}\label{def:quadratic-performance}
    The closed-loop system~\eqref{eq:LFR-bilinear-performance} satisfies local quadratic performance with index $
        \Pi_p = \begin{bmatrix}
            Q_p & S_p \\ S_p^\top & R_p
        \end{bmatrix}
    $ and supply rate 
    \begin{equation*}
        s(w_p,z_p) 
        = \begin{bmatrix}
            w_p \\ z_p    
        \end{bmatrix}^\top 
        \Pi_p
        \begin{bmatrix}
            w_p \\ z_p    
        \end{bmatrix},
    \end{equation*}
    where $R_p\succeq 0$, $Q_p \prec 0$, if there exist $\varepsilon,\delta>0$ such that 
    \begin{equation}\label{eq:quadratic-performance}
        \sum_{k=0}^\infty 
        s(w_{p,k},z_{p,k})
        \leq 
        - \varepsilon
        \sum_{k=0}^\infty 
        \|w_{p,k}\|^2
    \end{equation}
    for all $w_p\in\cB_\delta=\{w_p\in\bbR^q \mid \|w_p\|^2 \leq \delta\}$.
\end{definition}
For instance, $Q_p = -\gamma^2 I$, $S_p = 0$, and $R_p=I$ correspond to a local $\cL_2$-gain bound $\gamma$ on the performance channel. 
\begin{assumption}\label{ass:lower-bound-suppy-rate}
    There exists a continuous and strictly increasing function $\alpha: [0,\infty) \to [0,\infty)$ with $\alpha(0)=0$ such that $s(w_p,z_p) \geq - \alpha(\|w_p\|^2)$.
\end{assumption}
Assumption~\ref{ass:lower-bound-suppy-rate} is satisfied for commonly used supply rates $s(w_p,z_p)$, e.g., the supply rate corresponding to an $\cL_2$-gain bound satisfies $s(w_p,z_p)\geq -\gamma^2\|w_p\|^2$. 
Note that we consider \emph{local} performance since our analysis relies on invariance of $z\in\cZ$. 
Under the presence of arbitrary disturbances $w_p$, this invariance might be violated. As a remedy, we assume boundedness of $w_p$ and show later robust positive invariance of $\cZ_\mathrm{RoA}\subseteq\cZ$ to ensure invariance of $(I_m \kron z) \in \mathbf{\Delta}$ in the LFR~\eqref{eq:LFR-bilinear-performance}. 

Further, we assume that $\Pi_p$ is invertible and define 
\begin{equation*}
    \begin{bmatrix}
        \tQ_p & \tS_p \\
        \tS_p^\top & \tR_p
    \end{bmatrix}
    =
    \begin{bmatrix}
        Q_p & S_p \\
        S_p^\top & R_p
    \end{bmatrix}^{-1}.
\end{equation*}

\begin{theorem}\label{thm:stability-condition-LFR-performance}
    Suppose Assumption~\ref{ass:lower-bound-suppy-rate} holds. If there exist a symmetric $N\times N$ matrix ${P}=P^\top\succ 0$, matrices ${L}\in\bbR^{m\times N}$, $L_w\in\bbR^{m\times Nm}$, a symmetric $m\times m$ matrix $\tilde{\Lambda}=\tilde{\Lambda}\succ 0$, and scalars $\nu>0,\tilde{\lambda}>0$ such that~\eqref{eq:stability-condition-LFR-performance} and~\eqref{eq:stability-condition-LFR-nominal-invariance} hold,%
    \begin{figure*}[!t]
        \begin{equation}\label{eq:stability-condition-LFR-performance}
            \left[
                \def\arraystretch{1.15}\begin{array}{cccc|c}
                    \multicolumn{4}{c|}{\multirow{4}{*}{$\cQ_\mathrm{GS}$}}
                    & \tilde{\lambda} B_p(\tQ_p D_{p,w}^\top - \tS_p)
                    \\
                    & & &
                    & -(\tD_{p,uz}(\tilde{\Lambda}\kron\tS_z) + D_{p,u}L_w(I_m\kron\tQ_z^{-1}\tS_z))^\top
                    \\
                    & & &
                    & (C_p P + D_{p,u}L)^\top
                    \\
                    & & &
                    & (\tD_{p,uz}(\tilde{\Lambda}\kron I_N) + D_{p,u}L_w)^\top 
                    \\\hline
                    \star & \star & \star & \star
                    & \tilde{\lambda}(\tR_p -D_{p,w}\tS_p - \tS_p^\top D_{p,w}^\top + D_{p,w} \tQ_p D_{p,w}^\top)
                \end{array}
            \right]
            + \tilde{\lambda}\begin{bmatrix}
                B_p \\ 0 \\ 0 \\ 0 \\ 0
            \end{bmatrix}
            \tQ_p
            \begin{bmatrix}
                B_p \\ 0 \\ 0 \\ 0 \\ 0
            \end{bmatrix}^\top
            \succ 0
        \end{equation}
        \medskip
        \hrule
    \end{figure*}
    then $\cZ_\mathrm{RoA} = \{z\in\bbR^N \mid z^\top P^{-1} z \leq 1\}\subseteq \cZ$ and the controller~\eqref{eq:controller-gain-scheduling-explicit} achieves local asymptotic stability and quadratic performance of system~\eqref{eq:dynamics-bilinear-performance} for all initial conditions $z_0\in\cZ_\mathrm{RoA}$.
\end{theorem}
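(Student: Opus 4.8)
The plan is to follow the proof of Theorem~\ref{thm:stability-condition-LFR-gain-scheduling} step by step, now dragging the performance channel $w_p\mapsto z_p$ through the same chain of manipulations, and then to append a robust-invariance argument that keeps the state inside $\cZ$ under bounded disturbances so that the uncertainty relation stays valid. As in Theorem~\ref{thm:stability-condition-LFR-gain-scheduling} I set $K=LP^{-1}$, $K_w=L_w(\tilde{\Lambda}^{-1}\kron\tQ_z^{-1})$, $\tilde{P}=P^{-1}$, $\Lambda=\tilde{\Lambda}^{-1}$, and additionally $\lambda=\tilde{\lambda}^{-1}$. The inclusion $\cZ_\mathrm{RoA}\subseteq\cZ$ is obtained verbatim from~\eqref{eq:stability-condition-LFR-nominal-invariance} (dualization lemma followed by the S-procedure), exactly as in Theorem~\ref{thm:stability-condition-LFR-gain-scheduling}; this part does not involve the performance data.

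Next I would reduce~\eqref{eq:stability-condition-LFR-performance}. Performing the Schur-complement steps of Theorem~\ref{thm:stability-condition-LFR-gain-scheduling} on the $\cQ_\mathrm{GS}$ block, plus one more Schur step on the new fifth block row and column that eliminates the $\tQ_p$-term in the same way the $\tQ_z$-terms were eliminated before, and recalling the identity for $\Pi_\Delta^{-1}$ and the unitary $T$ from~\eqref{eq:multiplier-trafo}, the condition~\eqref{eq:stability-condition-LFR-performance} becomes a quadratic form with middle matrix $\mathrm{diag}(-P,\,P,\,\Pi_\Delta^{-1},\,\lambda^{-1}\Pi_p^{-1})$ acting on the dual outer factor of the closed-loop LFR~\eqref{eq:LFR-bilinear-performance}. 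The dualization lemma~\cite[Lm.~4.9]{scherer:weiland:2000} then turns this into the primal robust quadratic performance LMI with middle matrix $\mathrm{diag}(-\tilde{P},\,\tilde{P},\,\Pi_\Delta,\,\lambda\Pi_p)$ and the primal outer factor of~\eqref{eq:LFR-bilinear-performance} built from $\cA,\cB,B_p,K,K_w,\cC,\cD,D_{p,w}$, where $\lambda\Pi_p$ is admissible since $\lambda>0$. By~\cite[Thm.~2]{scherer:2001}, this primal LMI certifies that $I-K_w(I_m\kron z)$ is nonsingular for all $(I_m\kron z)\in\mathbf{\Delta}$ (so the controller is well defined on $\cZ$ and has the form~\eqref{eq:controller-gain-scheduling-explicit}, as in Theorem~\ref{thm:stability-condition-LFR-gain-scheduling}), and that the storage function $V(z)=z^\top\tilde{P}z$ satisfies, along~\eqref{eq:LFR-bilinear-performance} and for every $(I_m\kron z)\in\mathbf{\Delta}$, the strict dissipation inequality $V(z_+)-V(z)+\lambda\,s(w_p,z_p)\leq-\varepsilon(\|z\|^2+\|w_p\|^2)$ for some $\varepsilon>0$; by Proposition~\ref{prop:multiplier-tightness} this inequality holds for all $z\in\cZ$.

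From here I would split into the disturbance-free and the disturbed case. For $w_p=0$, since $R_p\succeq0$ gives $s(0,z_p)\geq0$, the dissipation inequality reduces to $V(z_+)-V(z)\leq-\varepsilon\|z\|^2<0$ on $\cZ\setminus\{0\}$; combined with positive invariance of the sublevel set $\cZ_\mathrm{RoA}\subseteq\cZ$ (which follows from $\Delta V\leq0$ on $\cZ$), this yields local asymptotic stability for every $z_0\in\cZ_\mathrm{RoA}$. For $w_p\neq0$ I would first show robust positive invariance of $\cZ_\mathrm{RoA}$ under $w_p\in\cB_\delta$: using Assumption~\ref{ass:lower-bound-suppy-rate} ($-s(w_p,z_p)\leq\alpha(\|w_p\|^2)$) and $\|z\|^2\geq V(z)/\lambda_{\max}(\tilde{P})$, the dissipation inequality gives $V(z_+)\leq\bigl(1-\varepsilon/\lambda_{\max}(\tilde{P})\bigr)V(z)+\lambda\,\alpha(\|w_p\|^2)$ for $z\in\cZ$, so picking $\delta>0$ with $\lambda\,\alpha(\delta)\leq\min\{1,\varepsilon/\lambda_{\max}(\tilde{P})\}$ — possible because $\alpha$ is continuous, strictly increasing, and $\alpha(0)=0$ — makes $V(z)\leq1$ and $\|w_p\|^2\leq\delta$ imply $V(z_+)\leq1$. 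Consequently, a trajectory started in $\cZ_\mathrm{RoA}$ and driven by $w_p\in\cB_\delta$ never leaves $\cZ_\mathrm{RoA}\subseteq\cZ$, hence $(I_m\kron z_k)\in\mathbf{\Delta}$ at every step and the dissipation inequality holds along the whole trajectory; summing it over $k=0,1,\dots$ and telescoping with $V\geq0$ gives $\sum_{k=0}^\infty s(w_{p,k},z_{p,k})\leq\lambda^{-1}V(z_0)-\varepsilon\lambda^{-1}\sum_{k=0}^\infty\|w_{p,k}\|^2$, which is exactly~\eqref{eq:quadratic-performance} when the performance channel is assessed from $z_0=0$ as in Definition~\ref{def:quadratic-performance}.

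The main obstacle I anticipate is the genuinely new, local ingredient — the calibration of $\delta$ against $\varepsilon$, $\lambda_{\max}(\tilde{P})$, and $\alpha$ so that the performance disturbance cannot push the state out of $\cZ$; this is precisely what preserves $(I_m\kron z)\in\mathbf{\Delta}$ and hence keeps the whole dissipation/summation argument alive. The remaining work — tracking the extra scalar $\tilde{\lambda}$ and the performance entries through the Schur complements and the dualization lemma — is careful but routine bookkeeping that runs in parallel with the proof of Theorem~\ref{thm:stability-condition-LFR-gain-scheduling}.
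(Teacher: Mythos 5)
Your proposal is correct and follows essentially the same route as the paper's proof: Schur complements plus the dualization lemma to obtain the primal robust performance LMI with middle matrix $\mathrm{diag}(-\tilde{P},\tilde{P},\Pi_\Delta,\lambda\Pi_p)$, the resulting strict dissipation inequality combined with Assumption~\ref{ass:lower-bound-suppy-rate} to calibrate $\delta$ for robust positive invariance of $\cZ_\mathrm{RoA}$, and a telescoping sum to conclude quadratic performance. Your explicit treatment of the disturbance-free case for asymptotic stability and your remark about the residual $V(z_0)$ term in the summed inequality are slightly more detailed than the paper's write-up, but the argument is the same.
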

\begin{proof}
    We divide the proof into two parts, where we first establish robust positive invariance of $\cZ_\mathrm{RoA}\subseteq\cZ$ and then show local quadratic performance.
    
    \paragraph*{Part I: Robust positive invariance of $\cZ_\mathrm{RoA}\subseteq\cZ$}
    First, note that the set inclusion $\cZ_\mathrm{RoA}\subseteq\cZ$ is similarly obtained as in the proof of Theorem~\ref{thm:stability-condition-LFR-gain-scheduling}.
    Further, the set $\cZ_\mathrm{RoA}$ is robust positively invariant if there exists a $\delta > 0$ such that $z_+\in\cZ_\mathrm{RoA}$ for all $z\in\cZ_\mathrm{RoA}$ and $w_p\in\cB_\delta$.
    According to the definition of $\cZ_\mathrm{RoA}$, we define the function $V(z) = z^\top P^{-1} z$ such that robust positive invariance is equivalently described by $V(z_+)\leq V(z)$ for all $z\in\cZ_\mathrm{RoA}$ and $w_p\in\cB_\delta$.
    In the following, $V(z)$ will serve as a Lyapunov function.
    Since the presented controller design procedure is framed in the robust control framework of~\cite{scherer:2000}, the following steps are a direct adaption of the proof of Theorem~\ref{thm:stability-condition-LFR-gain-scheduling}. 
    
    In particular, given~\eqref{eq:stability-condition-LFR-performance}, we use the Schur complement twice and apply the dualization lemma~\cite[Lm.~4.9]{scherer:weiland:2000} to arrive at $\Xi \prec 0$, where
    \begin{equation*}\small
        \Xi = \begin{bmatrix}
            \star
        \end{bmatrix}^\top
        \left[\def\arraystretch{1.15}\begin{array}{cc|cc|cc}
            -\tilde{P} & 0 & 0 & 0 & 0 & 0 \\ 
            0 & \tilde{P} & 0 & 0 & 0 & 0 \\\hline
            0 & 0 & \multicolumn{2}{c|}{\multirow{2}{*}{$\Pi_{\Delta}$}} & 0 & 0 \\
            0 & 0 & & & 0 & 0 \\\hline
            0 & 0 & 0 & 0 & \multicolumn{2}{c}{\multirow{2}{*}{$\lambda\Pi_p$}}\\
            0 & 0 & 0 & 0
        \end{array}\right]
        \left[\def\arraystretch{1.15}\begin{array}{ccc}
            I & 0 & 0 \\ \cA & \cB & B_p \\\hline
            0 & I & 0 \\ K & K_w & 0 \\\hline 
            0 & 0 & I \\ \cC & \cD & D_{p,w}
        \end{array}\right]
    \end{equation*}
    with $K=LP^{-1}$, $K_w=L_w(\tilde{\Lambda}^{-1}\kron \tQ_z^{-1})$, $\tilde{P}=P^{-1}$, $\Lambda=\tilde{\Lambda}^{-1}$, $\lambda=\tilde{\lambda}^{-1}$.
    Since the inequality $\Xi\prec 0$ is strict, there exist $\rho,\varepsilon>0$ such that 
    \begin{equation}\label{eq:proof-performance-Xi-nonstrict}
        \Xi + \diag(\rho I,0,\varepsilon I) \preceq 0.    
    \end{equation}
    Recall $V(z) = z^\top \tilde{P} z$ and the definition of the uncertainty characterization $\mathbf{\Delta}$.
    Then, multiplying~\eqref{eq:proof-performance-Xi-nonstrict} from left and right by $\begin{bmatrix}
        z^\top & w^\top & w_p^\top
    \end{bmatrix}^\top$ and its transpose, respectively, yields
    \begin{equation}\label{eq:proof-performance-dissipation-inequality}
        \Delta V(z) \leq - (\rho \|z\|^2 + \varepsilon\|w_p\|^2 + \lambda s(w_p,z_p))
    \end{equation}
    for all $(I_m\kron z)\in\mathbf{\Delta}$ (compare~\cite[Thm.~10.4]{scherer:2000}).
    Hence, the obtained controller guarantees~\eqref{eq:proof-performance-dissipation-inequality} for the bilinear system~\eqref{eq:dynamics-bilinear-performance} for all $z\in\cZ$ due to Proposition~\ref{prop:multiplier-tightness}.
    Since $\cZ_\mathrm{RoA}\subseteq \cZ$, this holds in particular for all $z\in\cZ_\mathrm{RoA}$.

    Next, we observe
    \begin{equation*}
        \rho \|z\|^2 + \varepsilon\|w_p\|^2 + \lambda s(w_p,z_p)
        \geq \rho \|z\|^2 - \lambda \alpha(\|w_p\|^2),
    \end{equation*}
    where we use the lower bound for $s(w_p,z_p)$ in Assumption~\ref{ass:lower-bound-suppy-rate}. 
    Thus,~\eqref{eq:proof-performance-dissipation-inequality} leads to 
    \begin{equation*}
        V(z_+) \leq V(z) -\rho \|z\|^2 + \lambda \alpha(\|w_p\|^2).
    \end{equation*}
    Using $V(z)\leq \|z\|^2 \|\tP\|_2$ or, in particular, $\|z\|^2 \geq \frac{V(z)}{\|\tP\|_2}$, yields
    \begin{equation*}
        V(z_+) \leq \left(1-\frac{\rho}{\|\tP\|_2}\right) V(z) + \lambda \alpha(\|w_p\|^2).
    \end{equation*}
    Hence, for all $z\in\cZ_\mathrm{RoA}$, i.e., $V(z)\leq 1$, we obtain $V(z_+)\leq 1$ for all $w_p$ with $\|w_p\|^2 \leq \delta \coloneqq \alpha^{-1}\left(\frac{\rho}{\lambda\|\tP\|_2}\right)$.
    Thus, we have established the existence of a $\delta>0$ such that the region of attraction $\cZ_\mathrm{RoA}$ is robust positively invariant.
     
    \paragraph*{Part II: Local quadratic performance}
    Due to the established robust positive invariance of $\cZ_\mathrm{RoA}$, we know that~\eqref{eq:proof-performance-dissipation-inequality} holds for all times.
    Then, by building the sum of~\eqref{eq:proof-performance-dissipation-inequality} for all times $k=0$ to $k\to\infty$, we establish~\eqref{eq:quadratic-performance} and, thus, quadratic performance according to Definition~\ref{def:quadratic-performance}.
    Hence, LMIs~\eqref{eq:stability-condition-LFR-performance},~\eqref{eq:stability-condition-LFR-nominal-invariance} ensure local asymptotic stability and quadratic performance of the bilinear system~\eqref{eq:dynamics-bilinear-performance} for all initial conditions $z_0\in\cZ_\mathrm{RoA}$. 
\end{proof}

Similar to the discussion after Theorem~\ref{thm:stability-condition-LFR-gain-scheduling},~\eqref{eq:stability-condition-LFR-performance} reduces for $L_w=0$ to a sufficient condition for local closed-loop stability and quadratic performance for all initial conditions $z_0\in\cZ_\mathrm{RoA}$ under linear state feedback. Hence, the achieved performance of the rational controller obtained by the design procedure in Theorem~\ref{thm:stability-condition-LFR-performance} is at least as good as the performance using linear state feedback.  
In order to obtain closed-loop guarantees for the largest possible region $\cZ_\mathrm{RoA}\subseteq\cZ$, we optimize its size. To this end, we maximize $\tr(P)$ such that~\eqref{eq:stability-condition-LFR-nominal-invariance} and~\eqref{eq:stability-condition-LFR-performance} hold.

Due to the considered LFR framework, we note that additional uncertainties acting on the bilinear system~\eqref{eq:dynamics-bilinear-performance} can be easily included in the proposed controller design approach, which is a crucial advantage over the framework presented in~\cite{khlebnikov:2018}.
One direction could build on~\cite{strasser:berberich:allgower:2023}, where a similar LFR-based framework is used for data-driven control of general nonlinear systems with scalar inputs. 
Here, Koopman operator theory is used to deduce an equivalent error-affected bilinear representation of the unknown nonlinear system leading to a controller design with closed-loop guarantees for the true system based on the bilinear description. 
The design approach proposed in the present paper may allow to solve such nonlinear data-driven control problems in a possibly less conservative way.

\section{SIMULATION EXAMPLES}\label{sec:simulation-examples}
In the following, we illustrate the benefits of our approach by means of a numerical example. All simulations are conducted in Matlab using the toolbox YALMIP~\cite{lofberg:2004} with the semidefinite programming solver MOSEK~\cite{mosek:2022}.

\begin{example}\label{exmp:bilinear-scalar}
    Consider the scalar bilinear system 
    \begin{equation*}
        z_+ = z + (z+1)u
    \end{equation*}
    in the local region $\cZ=\{z\in\bbR\mid z^2\leq R_z\}$, i.e., $Q_z = -1$, $S_z=0$, and a variable $R_z\geq 0$ leading to different sizes of $\cZ$.
    At first, we are only interested in closed-loop stability without explicit performance objective.
    To obtain the largest subset of the region of attraction $\cZ_\mathrm{RoA}=\{z\in\bbR\mid z^2\leq P\}\subseteq\cZ$, we maximize $\tr(P)$ subject to~\eqref{eq:stability-condition-LFR-nominal-invariance} and $\cQ\succ 0$ or $\cQ_\mathrm{GS}\succ 0$ for the controller designs of Theorem~\ref{thm:stability-condition-LFR-nominal} or Theorem~\ref{thm:stability-condition-LFR-gain-scheduling}, respectively. Choosing $R_z = 0.9$, both optimizations yield $P = 0.9$, i.e., $\cZ_\mathrm{RoA}=\cZ$ with the respective linear state-feedback controller 
    \begin{equation*}
        u(z) = -0.6178 z
    \end{equation*}
    and the gain-scheduling enhanced controller
    \begin{equation*}
        u(z) = -\frac{0.5324z}{1+0.5762z}.
    \end{equation*}
    In particular, both approaches lead to a stabilizing controller in the local region $\cZ_\mathrm{RoA}=\{z\in\bbR\mid z^2<1\}$. For larger $\cZ_\mathrm{RoA}$, the design procedures are not able to find a stabilizing controller as then the uncertainty region includes $z=-1$ for which the system is not controllable. 
    
    To further analyze the proposed controller design, we add a performance channel, i.e., we consider $
        z_+ = z + (z+1)u + w_p
    $ and the performance channel $w_p\mapsto z_p$, where $z_p=z$. The desired performance goal is to minimize the local $\cL_2$-gain of the corresponding performance channel. To this end, we choose the performance index $Q_p = -\gamma^2$, $S_p = 0$, $R_p = 1$ and compare the closed-loop behavior of both controllers for two scenarios.
    
    1) First, we consider again the region $\cZ=\{z\in\bbR \mid z^2\leq 0.9\}$ and search for the minimal achievable $\cL_2$-gain for the region $\cZ_\mathrm{RoA}\subseteq\cZ$, where we again maximize $\tr(P)$ subject to the design conditions. 
    Then, on the one hand, linear state feedback (i.e., Theorem~\ref{thm:stability-condition-LFR-performance} with $L_w=0$) achieves an $\cL_2$-gain bound of $\gamma=19.49$ for $z\in\cZ_\mathrm{RoA}=\cZ$ and the controller $u(z)= -z$ with the resulting closed-loop system $z_+ = z + (z+1)u(z) + w_p = -z^2 + w_p$. On the other hand, designing a gain-scheduling enhanced controller based on Theorem~\ref{thm:stability-condition-LFR-performance} for general $L_w$ leads to the $\cL_2$-gain bound of $\gamma=1.001$ for $z\in\cZ_\mathrm{RoA}=\cZ$ and the controller $u(z) = -\frac{z}{1+z}$ with the resulting closed loop $z_+ = z + (z+1)u(z) + w_p = w_p$. 
    Thus, the additional flexibility of the latter controller benefits the closed-loop behavior and improves the achieved $\cL_2$-performance compared to linear state feedback.
    
    2) Next, we compare the largest possible region of attraction $\cZ_\mathrm{RoA}$ such that an $\cL_2$-gain of $\gamma=1.5$ can be guaranteed for all $z\in\cZ_\mathrm{RoA}$ using both methods. While the state-feedback approach leads to a stabilizing controller with the desired $\cL_2$-gain for all $z^2\leq 0.1111$, the gain-scheduling controller design guarantees the $\cL_2$-gain for $z^2\leq 0.9999$.

    Figure~\ref{fig:feasibility-comparison} shows clearly the improvements of the proposed gain-scheduling controller by comparing the achievable $\cL_2$-gain bound for different sizes of the region of attraction $\cZ_\mathrm{RoA}$ using a linear state-feedback controller and a gain-scheduling controller.
    We note that a lower bound on the \emph{true} $\cL_2$-gain is obtained via uniform sampling of disturbance signals $w_p\in\cB_\delta$ and simulating the closed loop. 
    In particular, we compute the empirical lower bound $\gamma_\mathrm{lb}=1$ via the worst case amplification for a sample size of $10^7$. 
    Thus, the analysis condition~\eqref{eq:stability-condition-LFR-performance} evaluated for the closed loop with the proposed gain-scheduling controller~\eqref{eq:controller-gain-scheduling-explicit} leads indeed to a tight $\cL_2$-gain for this example system.

    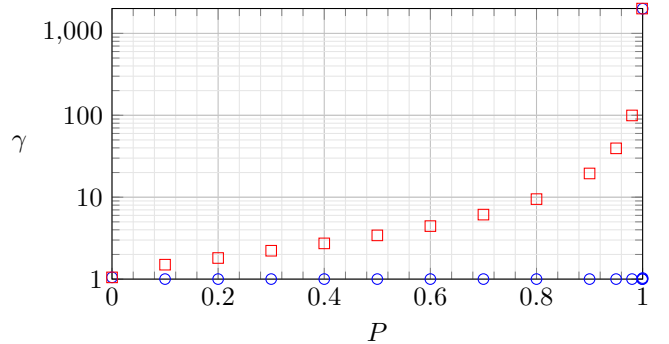
\begin{figure}[htb]
        \centering
        \begin{tikzpicture}[%
            /pgfplots/every axis y label/.style={at={(0,0.5)},xshift=-35pt,rotate=0},%
            /pgfplots/every axis x label/.style={at={(0.5,0)},yshift=-20pt,rotate=0},%
          ]%
            \begin{axis}[
                ymode=log,
                log ticks with fixed point,
                ymin=1,ymax=2010,
                ylabel={$\gamma$},
                xmin=0,xmax=1,
                xlabel={$P$},
                minor x tick num=4,
                grid=both,
                minor grid style={gray!20},
                height = 0.6\columnwidth,
                width=\columnwidth,
            ]  
                \addplot[color=blue,only marks,mark=o] coordinates{(0.0001,1.05) (0.1,1.001) (0.2,1.001) (0.3,1.001) (0.4,1.001) (0.5,1.001) (0.6,1.001) (0.7,1.001) (0.8,1.001) (0.9,1.001) (0.95,1.001) (0.98,1.001) (0.999,1.001) (0.99995,1.001) (0.9999999,1.03) (0.999999999108,2000.32)};
                \addplot[color=red,only marks,mark=square] coordinates{(0.0001,1.05) (0.1,1.5) (0.2,1.81) (0.3,2.22) (0.4,2.73) (0.5,3.42) (0.6,4.44) (0.7,6.13) (0.8,9.48) (0.9,19.49) (0.95,39.5) (0.98,99.52) (0.999,2007.54)};
            \end{axis}
        \end{tikzpicture}
        \caption{Achievable $\cL_2$-gain bound $\gamma$ for different $P$ defining the region of attraction $\cZ_\mathrm{RoA}=\{z\in\bbR \mid z^2\leq P\}$ using linear state feedback ({\color{red}$\square$}) and a gain-scheduling controller ({\color{blue}$\circ$}).}
        \label{fig:feasibility-comparison}
    \end{figure}
\end{example}

\begin{example}
    Next, we want to apply the proposed design method to a practically relevant system. 
    For this reason, consider the cellular model
    \begin{align*}
        z_+^1 &= z^1 + T_\mathrm{s} z^2, \\
        z_+^2 &= z^2 + T_\mathrm{s} (a z^2 + u(az^1 + z^2) - cau),
    \end{align*}
    for a cattle that results from discretization with sampling rate $T_\mathrm{s}=0.01$ from the continuous-time dynamics proposed in~\cite{mohler:1973} and references therein. 
    Here, $z^1$ denotes the weight of the cattle, $z^2$ is the change of its weight, $a=13$ is a constant, and $c=0.6$ is a hereditary constant. 
    Moreover, the input $u$ is the total energy consumed minus energy required for maintenance.
    Further, we consider the discretized bilinear system in the local region $\cZ = \{z\in\bbR^{2} \mid z^\top z \leq R_z \}$. For this system, we want to maximize the region $\cZ_\mathrm{RoA}\subseteq \cZ$ for which the controller design procedure in Theorem~\ref{thm:stability-condition-LFR-gain-scheduling} guarantees closed-loop stability. For linear state feedback, i.e., $L_w=0$, we obtain the margin $R_z=0.28$ and a region of attraction 
    \begin{equation*}
        \cZ_\mathrm{RoA}=\left\{ 
            z\in\bbR^2
            \middle| 
            z^\top 
            \begin{bmatrix}
                3.61 & 0.31 \\ 0.31 & 6.04
            \end{bmatrix}
            z
            \leq 1
        \right\}
        \subset \cZ,
    \end{equation*}
    and, in particular, $\cZ_\mathrm{RoA}\neq \cZ$.
    For comparison, we also implement the controller design in~\cite[Thm.~1]{khlebnikov:2018} which depends nonlinearly on a design parameter $\epsilon$ affecting both feasibility and the closed-loop behavior. For the particular choice of $\epsilon=10^{-1}$, the resulting controller yields a smaller stability region than $\cZ_\mathrm{RoA}\subset \cZ$. We note that the controller design for a different choice of $\epsilon$ leads, if feasible, to an even smaller stability region.
    On the other hand, a gain-scheduling controller~\eqref{eq:controller-gain-scheduling-explicit} with $L_w\neq 0$ can guarantee closed-loop stability for the full region $\cZ$. 
    Moreover, we can enlarge its region of attraction further to the set $\cZ_\mathrm{GS} = \{z\in\bbR^2\mid z^\top z \leq 0.35\}$, demonstrating the merit of the proposed approach.
\end{example}

\begin{example}
    Finally, we illustrate the benefits of the multiplier class~\eqref{eq:Delta-representation-Pi-structured} for describing bilinear systems with multiple inputs. To this end, consider the system
    \begin{multline*}
        z_+ = \begin{bmatrix}
            1 & 1 & 1 \\
            1 & 0 & 1 \\
            1 & 0 & 1 
        \end{bmatrix}
        z 
        + \begin{bmatrix}
            1 & 0 \\ 1 & -1 \\ -1 & 1
        \end{bmatrix}
        u
        \\
        + \begin{bmatrix}
             1 & 0 &  1   & 1 & 0 & 0 \\
             0 & 1 &  1   & 0 & 1 & 0 \\
            -1 & 1 & -1   & 0 & 0 & 1
        \end{bmatrix}
        (u\kron z)
    \end{multline*}
    with a two-dimensional input for $\cZ=\{z\in\bbR^{3}\mid \|z\|\leq 0.1\}$.
    For this system, we design a gain-scheduling controller based on Theorem~\ref{thm:stability-condition-LFR-gain-scheduling} using both a full multiplier $0\prec \tilde{\Lambda}\in\bbR^{m\times m}$ and a diagonally repeated multiplier $\tilde{\Lambda} = \mu I_m$, where $\mu >0$. A full multiplier leads to a minimum $\cL_2$-gain bound of $\gamma=3.88$, whereas a gain-scheduling controller based on diagonally repeated multipliers achieves only $\gamma=4.13$.
\end{example}

\section{CONCLUSION}\label{sec:conclusion}
In this paper, we presented a controller design method with stability and performance guarantees for general multi-dimensional discrete-time bilinear systems. 
To this end, we represented the bilinear dynamics as an LFR interpreting the bilinearity as a structured uncertainty for which we constructed a suitable class of multipliers.
We then used this representation to derive an LMI-based design procedure guaranteeing local asymptotic stability and quadratic performance for the bilinear system. 
While the proposed approach contains linear state-feedback design as a special case, the gain-scheduling techniques allow for more general feedback controllers depending rationally on the state. 
The gain-scheduling enhanced controller is able to achieve better performance or a larger region for which desired closed-loop properties can be guaranteed.
These benefits were finally demonstrated in simulation for different numerical examples.

Interesting future work lies in the generalization of the design to more general rational controllers.
While such an extension could be pursued directly based on sum-of-squares methods~\cite{vatani:hovd:olaru:2014}, we see potential in our proposed uncertainty characterization to use its tightness to ease the computational burden of direct sum-of-squares techniques.
Moreover, we aim at the application of the proposed LFR approach to a data-driven setting building on results in~\cite{bisoffi:depersis:tesi:2020a,strasser:berberich:allgower:2021,guo:depersis:tesi:2021}.
The relevance of the developed approach is also strengthened by the recent widespread use of Koopman operator theory, which allows to represent a nonlinear system as a bilinear system, see, e.g.,~\cite{strasser:berberich:allgower:2023,sinha:nandanoori:drgona:vrabie:2022}.
Thus, an interesting direction is to develop a controller design with closed-loop guarantees for general nonlinear systems.

% Bibliography
\bibliographystyle{IEEEtran}
\bibliography{literature}

% Generated by IEEEtran.bst, version: 1.14 (2015/08/26)
\begin{thebibliography}{10}
\providecommand{\url}[1]{#1}
\csname url@samestyle\endcsname
\providecommand{\newblock}{\relax}
\providecommand{\bibinfo}[2]{#2}
\providecommand{\BIBentrySTDinterwordspacing}{\spaceskip=0pt\relax}
\providecommand{\BIBentryALTinterwordstretchfactor}{4}
\providecommand{\BIBentryALTinterwordspacing}{\spaceskip=\fontdimen2\font plus
\BIBentryALTinterwordstretchfactor\fontdimen3\font minus
  \fontdimen4\font\relax}
\providecommand{\BIBforeignlanguage}[2]{{%
\expandafter\ifx\csname l@#1\endcsname\relax
\typeout{** WARNING: IEEEtran.bst: No hyphenation pattern has been}%
\typeout{** loaded for the language `#1'. Using the pattern for}%
\typeout{** the default language instead.}%
\else
\language=\csname l@#1\endcsname
\fi
#2}}
\providecommand{\BIBdecl}{\relax}
\BIBdecl

\bibitem{khalil:2002}
H.~K. Khalil, \emph{Nonlinear systems}, 3rd~ed.\hskip 1em plus 0.5em minus
  0.4em\relax Upper Saddle River, NJ: Prentice-Hall, 2002.

\bibitem{mohler:1973}
R.~R. Mohler, \emph{Bilinear control processes: with applications to
  engineering, ecology and medicine}.\hskip 1em plus 0.5em minus 0.4em\relax
  Academic Press, Inc., 1973.

\bibitem{bruni:dipillo:koch:1974}
C.~Bruni, G.~DiPillo, and G.~Koch, ``Bilinear systems: An appealing class of
  "nearly linear" systems in theory and applications,'' \emph{IEEE Transactions
  on Automatic Control}, vol.~19, no.~4, pp. 334--348, 1974.

\bibitem{spinu:athanasopulos:lazar:bitsoris:2012}
V.~Spinu, N.~Athanasopoulos, M.~Lazar, and G.~Bitsoris, ``Stabilization of
  bilinear power converters by affine state feedback under input and state
  constraints,'' \emph{IEEE Transactions on Circuits and Systems II: Express
  Briefs}, vol.~59, no.~8, pp. 520--524, 2012.

\bibitem{espana:landau:1978}
M.~Espana and I.~D. Landau, ``Reduced order bilinear models for distillation
  columns,'' \emph{Automatica}, vol.~14, no.~4, pp. 345--355, 1978.

\bibitem{carleman:1932}
T.~Carleman, ``{Application de la théorie des équations intégrales
  linéaires aux systèmes d'équations différentielles non linéaires},''
  \emph{Acta Mathematica}, vol.~59, pp. 63 -- 87, 1932.

\bibitem{koopman:1931}
B.~O. Koopman, ``Hamiltonian systems and transformation in {H}ilbert space,''
  \emph{Proc. of the National Academy of Sciences of the United States of
  America}, vol.~17, no.~5, p. 315, 1931.

\bibitem{pedrycz:1980}
W.~Pedrycz, ``Stabilization of bilinear systems by a linear feedback control,''
  \emph{Kybernetika}, vol.~16, no.~1, pp. 48--53, 1980.

\bibitem{longchamp:1980}
R.~Longchamp, ``Stable feedback control of bilinear systems,'' \emph{IEEE
  Transactions on Automatic Control}, vol.~25, no.~2, pp. 302--306, 1980.

\bibitem{gutman:1981}
P.-O. Gutman, ``Stabilizing controllers for bilinear systems,'' \emph{IEEE
  Transactions on Automatic Control}, vol.~26, no.~4, pp. 917--922, 1981.

\bibitem{gutman:1980}
------, \emph{\BIBforeignlanguage{English}{Controllers for Bilinear Systems}},
  ser. Technical Reports TFRT-7210.\hskip 1em plus 0.5em minus 0.4em\relax
  Department of Automatic Control, Lund Institute of Technology (LTH), 1980.

\bibitem{derese:noldus:1980}
I.~Derese and E.~Noldus, ``Design of linear feedback laws for bilinear
  systems,'' \emph{International Journal of Control}, vol.~31, no.~2, pp.
  219--237, 1980.

\bibitem{luesink:nijmeijer:1989}
R.~Luesink and H.~Nijmeijer, ``On the stabilization of bilinear systems via
  constant feedback,'' \emph{Linear algebra and its applications}, vol. 122,
  pp. 457--474, 1989.

\bibitem{benallou:mellichamp:seborg:1988}
A.~Benallou, D.~A. Mellichamp, and D.~E. Seborg, ``Optimal stabilizing
  controllers for bilinear systems,'' \emph{International Journal of Control},
  vol.~48, no.~4, pp. 1487--1501, 1988.

\bibitem{lin:byrnes:1994}
W.~Lin and C.~I. Byrnes, ``{KYP} lemma, state feedback and dynamic output
  feedback in discrete-time bilinear systems,'' \emph{Systems \& Control
  Letters}, vol.~23, no.~2, pp. 127--136, 1994.

\bibitem{amato:cosentino:fiorillo:merola:2009}
F.~Amato, C.~Cosentino, A.~S. Fiorillo, and A.~Merola, ``Stabilization of
  bilinear systems via linear state-feedback control,'' \emph{IEEE Transactions
  on Circuits and Systems II: Express Briefs}, vol.~56, no.~1, pp. 76--80,
  2009.

\bibitem{khlebnikov:2018}
M.~V. Khlebnikov, ``Quadratic stabilization of discrete-time bilinear
  systems,'' \emph{Automation and Remote Control}, vol.~79, pp. 1222--1239,
  2018.

\bibitem{petersen:1987}
I.~R. Petersen, ``A stabilization algorithm for a class of uncertain linear
  systems,'' \emph{Systems \& Control Letters}, vol.~8, no.~4, pp. 351--357,
  1987.

\bibitem{khlebnikov:shcherbakov:2008}
M.~V. Khlebnikov and P.~S. Shcherbakov, ``Petersen’s lemma on matrix
  uncertainty and its generalizations,'' \emph{Automation and Remote Control},
  vol.~69, no.~11, pp. 1932--1945, 2008.

\bibitem{strasser:berberich:allgower:2023}
R.~Str{\"a}sser, J.~Berberich, and F.~Allg{\"o}wer, ``Robust data-driven
  control for nonlinear systems using the {Koopman} operator,'' in \emph{Proc.
  22nd IFAC World Congress}, 2023, to appear, preprint on arXiv:2304.03519.

\bibitem{huang:jadbabaie:1999}
Y.~Huang and A.~Jadbabaie, ``Nonlinear {$H_\infty$} control: An enhanced
  quasi-{LPV} approach,'' \emph{IFAC Proceedings Volumes}, vol.~32, no.~2, pp.
  2754--2759, 1999.

\bibitem{coutinho:desouza:dasilva:caldeira:prieur:2019}
D.~Coutinho, C.~E. de~Souza, J.~M.~G. da~Silva, A.~F. Caldeira, and C.~Prieur,
  ``Regional stabilization of input-delayed uncertain nonlinear polynomial
  systems,'' \emph{IEEE Transactions on Automatic Control}, vol.~65, no.~5, pp.
  2300--2307, 2019.

\bibitem{scherer:2001}
C.~W. Scherer, ``{LPV} control and full block multipliers,'' \emph{Automatica},
  vol.~37, no.~3, pp. 361--375, 2001.

\bibitem{veenman:scherer:2014}
J.~Veenman and C.~W. Scherer, ``A synthesis framework for robust
  gain-scheduling controllers,'' \emph{Automatica}, vol.~50, no.~11, pp.
  2799--2812, 2014.

\bibitem{lawrence:rugh:1993}
D.~A. Lawrence and W.~J. Rugh, ``Gain scheduling dynamic linear controllers for
  a nonlinear plant,'' \emph{Automatica}, vol.~31, no.~3, pp. 381--390, 1995.

\bibitem{leith:leithead:2000}
D.~J. Leith and W.~E. Leithead, ``Survey of gain-scheduling analysis and
  design,'' \emph{International journal of control}, vol.~73, no.~11, pp.
  1001--1025, 2000.

\bibitem{zhou:doyle:glover:1996}
K.~Zhou, J.~C. Doyle, K.~Glover \emph{et~al.}, \emph{Robust and optimal
  control}.\hskip 1em plus 0.5em minus 0.4em\relax Prentice Hall New Jersey,
  1996, vol.~40.

\bibitem{berberich:scherer:allgower:2022}
J.~Berberich, C.~W. Scherer, and F.~Allg{\"o}wer, ``Combining prior knowledge
  and data for robust controller design,'' \emph{IEEE Transactions on Automatic
  Control}, 2022.

\bibitem{scherer:weiland:2000}
C.~W. Scherer and S.~Weiland, ``Linear matrix inequalities in control,''
  \emph{Lecture Notes, Dutch Institute for Systems and Control, Delft, The
  Netherlands}, vol.~3, no.~2, 2000.

\bibitem{boyd:vandenberghe:2004}
S.~P. Boyd and L.~Vandenberghe, \emph{Convex Optimization}.\hskip 1em plus
  0.5em minus 0.4em\relax Cambridge University Press, 2004.

\bibitem{doyle:glover:khargonekar:francis:1989}
J.~C. Doyle, K.~Glover, P.~P. Khargonekar, and B.~A. Francis, ``State-space
  solutions to standard {$H_2$} and {$H_\infty$} control problems,'' \emph{IEEE
  Transactions on Automatic Control}, vol.~34, no.~8, pp. 831--847, 1989.

\bibitem{packard:zhou:pandey:leonhardson:balas:1992}
A.~Packard, K.~Zhou, P.~Pandey, J.~Leonhardson, and G.~Balas, ``Optimal,
  constant {I/O} similarity scaling for full-information and state-feedback
  control problems,'' \emph{Systems \& Control Letters}, vol.~19, no.~4, pp.
  271--280, 1992.

\bibitem{astolfi:1997}
A.~Astolfi, ``On the relation between state feedback and full information
  regulators in nonlinear singular {$H_\infty$} control,'' \emph{IEEE
  Transactions on Automatic Control}, vol.~42, no.~7, pp. 984--988, 1997.

\bibitem{scherer:2000}
C.~W. Scherer, ``Robust mixed control and linear parameter-varying control with
  full block scalings,'' in \emph{Advances in linear matrix inequality methods
  in control}.\hskip 1em plus 0.5em minus 0.4em\relax SIAM, 2000, pp. 187--207.

\bibitem{lofberg:2004}
J.~{L\"{o}fberg}, ``{YALMIP}: A toolbox for modeling and optimization in
  {MATLAB},'' in \emph{Proc. {IEEE} International Conference on Robotics and
  Automation}, 2004, pp. 284--289.

\bibitem{mosek:2022}
M.~ApS, \emph{The {MOSEK} optimization toolbox for {MATLAB} manual. Version
  9.3.21}, 2022.

\bibitem{vatani:hovd:olaru:2014}
M.~Vatani, M.~Hovd, and S.~Olaru, ``Control design and analysis for discrete
  time bilinear systems using sum of squares methods,'' in \emph{Proc. 53rd
  IEEE Conference on Decision and Control (CDC)}, 2014, pp. 3143--3148.

\bibitem{bisoffi:depersis:tesi:2020a}
A.~Bisoffi, C.~De~Persis, and P.~Tesi, ``Data-based stabilization of unknown
  bilinear systems with guaranteed basin of attraction,'' \emph{Systems \&
  Control Letters}, vol. 145, p. 104788, 2020.

\bibitem{strasser:berberich:allgower:2021}
R.~Str{\"a}sser, J.~Berberich, and F.~Allg{\"o}wer, ``Data-driven control of
  nonlinear systems: Beyond polynomial dynamics,'' in \emph{Proc. 60th IEEE
  Conference on Decision and Control (CDC)}, 2021, pp. 4344--4351.

\bibitem{guo:depersis:tesi:2021}
M.~Guo, C.~De~Persis, and P.~Tesi, ``Data-driven stabilization of nonlinear
  polynomial systems with noisy data,'' \emph{IEEE Transactions on Automatic
  Control}, 2021.

\bibitem{sinha:nandanoori:drgona:vrabie:2022}
S.~Sinha, S.~P. Nandanoori, J.~Drgona, and D.~Vrabie, ``Data-driven
  stabilization of discrete-time control-affine nonlinear systems: A {Koopman}
  operator approach,'' in \emph{Proc. IEEE European Control Conference (ECC)},
  2022, pp. 552--559.

\end{thebibliography}

\appendix
\subsection{Proof of "only if" direction of Proposition~\ref{prop:multiplier-tightness}}\label{sec:appendix-proof-prop-multipliers}
A similar statement is presented in~\cite[Prop.~1]{berberich:scherer:allgower:2022} for repeated \emph{scalar} uncertainties. Our proof follows the same arguments to adapt the result to repeated \emph{vectors}. We structure the proof in three parts. First, we show that every $\Delta\in\mathbf{\Delta}$ is of the form $\Delta = \mathrm{diag}(z_1,...,z_m)$ with $z_1,...,z_m\in\bbR^N$, and second, we prove that $z_i=z_k$ for arbitrary $i,k\in\{1,...,m\}$, i.e., $\Delta=I_m\kron z$. Lastly, we show that $z\in\cZ$.
    
1) % diagonal vectors
    Choosing $\Lambda = e_ke_k^\top\succeq 0$ with the $k$-th unit vector $e_k$ and plugging it into~\eqref{eq:Delta-representation-Delta-structured} yields 
    \begin{multline}\label{eq:pf-uncertainty-ek}
        \Delta^\top\left[(e_ke_k^\top)\kron Q_z\right] \Delta 
        + \Delta^\top\left[(e_ke_k^\top)\kron S_z\right] 
        \\
        + \left[(e_ke_k^\top)\kron S_z^\top\right]\Delta
        + \left[(e_ke_k^\top)\kron R_z\right]
        \succeq 0
    \end{multline}
    for $k=1,...,m$.
    First, we observe 
    \begin{align*}
        (e_ke_k^\top)\kron Q_z 
        &= (e_ke_k^\top)\kron (Q_z I_N)
        = (e_k\kron Q_z)(e_k^\top \kron I_N) 
        \\
        % &= ((e_k\cdot 1)\kron (I_N Q_z))(e_k^\top \kron I_N) \\
        &= (e_k\kron I_N) (1\kron Q_z) (e_k^\top \kron I_N) 
        \\
        &= (e_k\kron I_N) Q_z (e_k^\top \kron I_N)
    \end{align*}
    by using $(a\kron b)(c\kron d) = (ac)\kron (bd)$.
    Then, multiplying~\eqref{eq:pf-uncertainty-ek} by $e_i^\top$ and $e_i$ from left and right, respectively, for any $i$ with $i\neq k$, we obtain 
    \begin{equation*}
        0
        \preceq 
        \left[e_i^\top \Delta^\top(e_k\kron I_N)\right] 
            Q_z
        \left[(e_k^\top\kron I_N) \Delta e_i\right],
    \end{equation*}
    where we use $e_i=(e_i\kron 1)$ and $e_i^\top e_k = 0$.
    Since $Q_z\prec 0$, we deduce $(e_k^\top \kron I_N)\Delta e_i = 0$, i.e., the $i$-th column of the $k$-th block row of length $N$ in $\Delta$ is zero. As $i\neq k$ was arbitrary for a fixed $k$, this holds for all columns except for the $k$-th one. Thus, the uncertainty has exactly the claimed structure $\Delta=\diag(z_1,...,z_m)$.

2) % uniqueness
    Let $i,k\in\{1,...,m\}$ be arbitrary and choose $\Lambda = (e_i-e_k)(e_i-e_k)^\top \succeq 0$. 
    Plugging $\Lambda$ into~\eqref{eq:Delta-representation-Delta-structured} and multiplying the inequality defining the set $\mathbf{\Delta}$ by $(e_i+e_k)^\top$ and $(e_i+e_k)$ from left and right, respectively, leads to
    \begin{equation*}
        \begin{bmatrix}\star\end{bmatrix}^\top
        Q_z
        \left(\left[(e_i-e_k)^\top\kron I_N\right] \Delta (e_i+e_k)\right)
        \succeq 0.
    \end{equation*}
    Using $Q_z\prec 0$, we deduce $\left[(e_i-e_k)^\top\kron I_N\right] \Delta (e_i+e_k)=0$, i.e., $z_i = z_k$.

3) % z in \cZ
    First, we observe that
    \begin{subequations}\label{eq:multiplier-trafo}
        \begin{align}
            \Pi_{\Delta} 
            &= T \left(
                \Lambda \kron \begin{bmatrix} Q_z & S_z \\ S_z^\top & R_z \end{bmatrix}
            \right) T^\top,
            \\
            T 
            &= \begin{bmatrix}
                I_m \kron \begin{bmatrix}
                    I_N & 0_{N\times 1}
                \end{bmatrix} \\
                I_m \kron \begin{bmatrix}
                    0_{1\times N} & 1
                \end{bmatrix}
            \end{bmatrix}.
        \end{align}
    \end{subequations}
    Then, by $(I_m\kron z)\in\mathbf{\Delta}$ we directly conclude 
    \begin{equation*}
        0 \preceq 
        \begin{bmatrix}\star\end{bmatrix}^\top
        \Pi_{\Delta}
        \begin{bmatrix}
            I_m\kron z \\ I
        \end{bmatrix}
        = \Lambda \kron \left(
            \begin{bmatrix}
                z \\ 1
            \end{bmatrix}^\top
            \begin{bmatrix}
                Q_z & S_z \\ S_z^\top & R_z
            \end{bmatrix}
            \begin{bmatrix}
                z \\ 1
            \end{bmatrix}
        \right).
    \end{equation*}
    Since $\Lambda\succeq 0$ is arbitrary, we choose $\Lambda=I_m$ to infer that~\eqref{eq:condition-z-in-Z} holds.
    Hence, we deduce $z\in\cZ$ which concludes the proof.

\end{document}